\providecommand{\U}[1]{\protect\rule{.1in}{.1in}}
\newtheorem{theorem}{Theorem}
\newtheorem{conjecture}[theorem]{Conjecture}
\newtheorem{corollary}[theorem]{Corollary}
\newtheorem{lemma}[theorem]{Lemma}
\newtheorem{problem}[theorem]{Problem}
\newtheorem{proposition}[theorem]{Proposition}
\newenvironment{proof}[1][Proof]{\noindent\textbf{#1.} }{\ \rule{0.5em}{0.5em}}
\begin{document}

\title{Weak Parity}
\author{Scott Aaronson\thanks{MIT. \ aaronson@csail.mit.edu. \ Supported by the
National Science Foundation under Grant No.\ 0844626, a TIBCO Chair, and an
Alan T.\ Waterman Award.}
\and Andris Ambainis\thanks{University of Latvia. \ andris.ambainis@lu.lv.
\ Supported by the European Commission under the project QALGO (Grant
No.\ 600700) and the ERC Advanced Grant MQC (Grant No.\ 320731).}
\and Kaspars Balodis\thanks{University of Latvia. \ kbalodis@gmail.com. \ Supported
by the European Social Fund within the project \textquotedblleft Support for
Doctoral Studies at University of Latvia.\textquotedblright}
\and Mohammad Bavarian\thanks{MIT. \ bavarian@mit.edu. \ Partially supported by NSF
STC Award 0939370.}}
\date{}
\maketitle

\begin{abstract}
We study the query complexity of \textsc{Weak Parity}: the problem of
computing the parity of an $n$-bit input string, where one only has to succeed
on a $1/2+\varepsilon$ fraction of input strings, but must do so with high
probability on those inputs where one does succeed. \ It is well-known that
$n$ randomized queries and $n/2$ quantum queries are needed to compute parity
on \textit{all} inputs. \ But surprisingly, we give a randomized algorithm for
\textsc{Weak Parity}\ that makes only $O(n/\log^{0.246}(1/\varepsilon
))$\ queries, as well as a quantum algorithm that makes $O(n/\sqrt
{\log(1/\varepsilon)})$ queries.\ \ We also prove a lower bound of
$\Omega\left(  n/\log\left(  1/\varepsilon\right)  \right)  $ in both cases;
and using extremal combinatorics, prove lower bounds of $\Omega(\log n)$\ in
the randomized case and $\Omega(\sqrt{\log n})$ in the quantum case for any
$\varepsilon>0$. \ We show that improving our lower bounds is intimately
related to two longstanding open problems about Boolean functions: the
Sensitivity Conjecture, and the relationships between query complexity and
polynomial degree.

\end{abstract}

\section{Introduction\label{INTRO}}

Given a Boolean input $X=\left(  x_{1},\ldots,x_{n}\right)  \in\left\{
0,1\right\}  ^{n}$, the \textsc{Parity} problem is to compute%
\begin{equation}
\text{\textsc{Par}}\left(  X\right)  :=x_{1}\oplus\cdots\oplus x_{n}\,.
\end{equation}
This is one of the most fundamental and well-studied problems in computer science.

Since \textsc{Par}$\left(  X\right)  $ is sensitive to all $n$ bits at every
input $X$, any classical\ algorithm for \textsc{Parity}\ requires examining
all $n$ bits. \ As a result, \textsc{Parity}\ is often considered a
\textquotedblleft maximally hard problem\textquotedblright\ for query or
decision-tree complexity. \ In the quantum case, one can get a \textit{slight}
improvement to $\left\lceil n/2\right\rceil $\ queries, by applying the
Deutsch-Jozsa algorithm \cite{dj} to successive pairs of coordinates ($\left(
x_{1},x_{2}\right)  $, $\left(  x_{3},x_{4}\right)  $, etc.) and then XORing
the results. \ However, that factor-of-two improvement is known to be the best
possible by quantum algorithms \cite{farhi:parity,bbcmw}.\footnote{Moreover,
this holds even for \textit{unbounded-error} quantum algorithms, which only
need to guess \textsc{Par}$\left(  X\right)  $\ with \textit{some} probability
greater than $1/2$, but must do so for every $X$.}

So we might wonder: can we learn \textit{anything} about a string's parity by
making a sublinear number of queries? \ One natural goal would be to compute
the parity, not for all inputs, but merely for as many inputs as possible.
\ This motivates the following problem, which will be the focus of this paper.

\begin{problem}
[\textsc{Weak Parity} or \textsc{WeakPar}$_{n,\varepsilon}$]Let $p\left(
X\right)  $\ be the probability that an algorithm accepts a Boolean input
$X\in\left\{  0,1\right\}  ^{n}$. \ Then given $\varepsilon>0$, satisfy%
\begin{equation}
\Pr_{X\in\left\{  0,1\right\}  ^{n}}\left[  \left\vert p\left(  X\right)
-\text{\textsc{Par}}\left(  X\right)  \right\vert \leq\frac{1}{3}\right]
\geq\frac{1}{2}+\varepsilon, \label{wp}%
\end{equation}
by querying $X$ as few times as possible. \ Equivalently, satisfy $\left\vert
A\right\vert \geq\left(  1/2+\varepsilon\right)  2^{n}$, where $A\subseteq
\left\{  0,1\right\}  ^{n}$\ is the set of all inputs $X$\ such that
$\left\vert p\left(  X\right)  -\text{\textsc{Par}}\left(  X\right)
\right\vert \leq1/3$.

We will sometimes refer to the above as \textquotedblleft
bounded-error\textquotedblright\ \textsc{Weak Parity}. \ In the
\textquotedblleft zero-error\textquotedblright\ variant, we instead want to
satisfy the stronger condition%
\begin{equation}
\Pr_{X\in\left\{  0,1\right\}  ^{n}}\left[  p\left(  X\right)
=\text{\textsc{Par}}\left(  X\right)  \right]  \geq\frac{1}{2}+\varepsilon.
\end{equation}

\end{problem}

To build intuition, let's start with some elementary remarks about
\textsc{Weak Parity}.

\begin{enumerate}
\item[(i)] Of course it's trivial to guess \textsc{Par}$\left(  X\right)
$\ on a\ $1/2$\ fraction of inputs $X$, for example by always outputting $0$.
\ (On the other hand, being \textit{wrong} on a $1/2+\varepsilon$\ fraction of
$X$'s is just as hard as being right on that fraction.)

\item[(ii)] As usual, the constant $1/3$\ in equation (\ref{wp}) is arbitrary;
we can replace it by any other constant in $\left(  0,1/2\right)  $ using amplification.

\item[(iii)] There is no requirement that the acceptance probability $p\left(
X\right)  $\ approximate a total Boolean function. \ In other words, if
$X\notin A$\ then $p\left(  X\right)  $\ can be anything in $\left[
0,1\right]  $.

\item[(iv)] It is not hard to see that \textsc{Weak Parity}\ is completely
uninteresting for deterministic classical algorithms. \ Indeed, any such
algorithm that makes fewer than $n$ queries\ correctly guesses \textsc{Par}%
$\left(  X\right)  $\ on exactly half of the inputs.

\item[(v)] Even a randomized or quantum algorithm must be \textquotedblleft
uncorrelated\textquotedblright\ with \textsc{Par}$\left(  X\right)  $, if it
always makes $T<n$\ queries (in the randomized case) or $T<n/2$\ queries (in
the quantum case). \ In other words, we must have%
\begin{equation}
\sum_{X\in\left\{  0,1\right\}  ^{n}}\left(  p\left(  X\right)  -\frac{1}%
{2}\right)  \left(  \text{\textsc{Par}}\left(  X\right)  -\frac{1}{2}\right)
=0, \label{uncor}%
\end{equation}
where $p\left(  X\right)  $\ is the algorithm's acceptance probability. \ The
reason is just Fourier analysis: if we switch domains from $\left\{
0,1\right\}  $ to $\left\{  1,-1\right\}  $, then \textsc{Par}$\left(
X\right)  =x_{1}\cdots x_{n}$. \ But for a randomized algorithm, $p\left(
X\right)  $\ is a multilinear polynomial in $x_{1},\ldots,x_{n}$\ of degree at
most $T<n$, while for a quantum algorithm, Beals et al.\ \cite{bbcmw}\ showed
that $p\left(  X\right)  $\ is a multilinear polynomial of degree at most
$2T<n$. \ And any such polynomial has correlation $0$ with the degree-$n$
monomial $x_{1}\cdots x_{n}$.

\item[(vi)] Crucially, however, equation (\ref{uncor}) does \textit{not} rule
out sublinear randomized or quantum algorithms for \textsc{Weak Parity}%
\ (which exist for all $\varepsilon=o\left(  1\right)  $, as we will see!).
\ The reason is a bit reminiscent of the famous \textit{hat puzzle}%
:\footnote{In that puzzle, $n$ players are each assigned a red hat or a blue
hat uniformly at random, and can see the colors of every hat except their own.
\ At least one player must guess the color of her own hat, and every guess
must be correct. \ Surprisingly, even though each player has only a
$1/2$\ probability of being correct, it is possible for the players to win
this game with probability $\sim1-1/n$, by \textquotedblleft
conspiring\textquotedblright\ so that the cases where they are wrong coincide
with each other. \ See http://en.wikipedia.org/wiki/Hat\_puzzle} suppose, for
example, that an algorithm output \textsc{Par}$\left(  X\right)  $\ with
probability exactly $2/3$\ on a $3/4$\ fraction of inputs $X$, and with
probability $0$ on the remaining $1/4$\ fraction of inputs. \ Such an
algorithm would succeed at \textsc{Weak Parity} for $\varepsilon=1/4$, despite
maintaining an overall correlation of $0$ with \textsc{Par}$\left(  X\right)
$.

\item[(vii)] The correlation argument does establish that, for the
\textit{zero-error} variant of \textsc{Weak Parity}, any randomized algorithm
must make at least $n$ queries, and any quantum algorithm must make at least
$n/2$\ queries, with \textit{some} nonzero probability.\footnote{For the
bounded-error variant of \textsc{Weak Parity}, the argument also establishes
that if $\varepsilon>1/4$, then any randomized algorithm\ must make
$n$\ queries, and any quantum algorithm must make $n/2$ queries.} \ Even then,
however, an algorithm that makes an \textit{expected} sublinear number of
queries on each input $X$ is not ruled out (and as we will see, such
algorithms exist).
\end{enumerate}

The regime of \textsc{Weak Parity}\ that interests us the most is where
$\varepsilon$\ is very small---the extreme case being $\varepsilon=1/2^{n}$.
\ We want to know: \textit{are there nontrivial randomized or quantum
algorithms to guess the parity of }$X$\textit{ on slightly more than half the
inputs?}

Despite an immense amount of work on query complexity, so far as we know the
above question was never asked before. \ Here we initiate its study, both by
proving upper and lower bounds, and by relating this innocent-looking question
to longstanding open problems in combinatorics, including the Sensitivity
Conjecture. \ Even though \textsc{Weak Parity}\ might look at first like a
curiosity, we will find that the task of understanding its query complexity is
tightly linked to \textit{general} questions about query complexity, and these
links help to motivate its study. \ Conversely, \textsc{Weak Parity}%
\ illustrates how an old pastime in complexity theory---namely,
understanding\ the largest possible gaps between query complexity measures for
\textit{arbitrary} Boolean functions---can actually have implications for the
query complexities of \textit{specific} problems.

\section{Our Results\label{RESULTS}}

First, in Section \ref{ALG}, we prove an upper bound of $O(n/\log
^{0.246}\left(  1/\varepsilon\right)  )$\ on the zero-error randomized query
complexity of \textsc{Weak Parity},\ and an upper bound of $O(n/\sqrt
{\log1/\varepsilon})$\ on its bounded-error quantum query complexity. \ (For
zero-error quantum query complexity, we get the slightly worse bound $O\left(
n\cdot\frac{\left(  \log\log\frac{1}{\varepsilon}\right)  ^{2}}{\sqrt
{\log1/\varepsilon}}\right)  $.)

Our quantum algorithm is based on Grover's algorithm, while our randomized
algorithm is based on the well-known $O\left(  n^{0.754}\right)  $\ randomized
algorithm for the complete binary AND/OR tree. \ For the zero-error quantum
algorithm, we use a recent zero-error quantum algorithm for the complete
binary AND/OR tree due to Ambainis et al.\ \cite{aclt}.

Then, in Section \ref{RSR}, we prove a not-quite-matching lower bound of
$\Omega\left(  n/\log\left(  1/\varepsilon\right)  \right)  $ queries, by
using random self-reducibility to reduce ordinary \textsc{Parity}\ to
\textsc{Weak Parity}. \ This lower bound is the same for randomized and
quantum, and for zero-error and bounded-error.

The gap between our upper and lower bounds might seem tiny. \ But notice that
the gap steadily worsens for smaller $\varepsilon$, reaching $O(n^{0.754}%
)$\ or $O(\sqrt{n})$\ or $O(\sqrt{n}\log^{2}n)$\ versus the trivial
$\Omega\left(  1\right)  $\ when $\varepsilon=1/2^{n}$. \ This leads us to ask
whether we can prove a nontrivial lower bound that works for \textit{all}
$\varepsilon>0$. \ Equivalently, can we rule out an $O\left(  1\right)
$-query randomized or quantum algorithm that computes \textsc{Parity} on a
subset $A\subseteq\left\{  0,1\right\}  ^{n}$\ of size $2^{n-1}+1$?

In Section \ref{SENSLB}, we show that we \textit{can} (barely) rule out such
an algorithm. \ In 1988, Chung et al.\ \cite{chung} showed that any induced
subgraph of the Boolean hypercube $\left\{  0,1\right\}  ^{n}$, of size at
least $2^{n-1}+1$, must have at least one vertex of degree $\Omega\left(  \log
n\right)  $. \ As a consequence, we deduce that for all $\varepsilon>0$, any
bounded-error randomized algorithm for \textsc{Weak Parity} must make
$\Omega(\log n)$\ queries, and any bounded-error quantum algorithm must make
$\Omega(\sqrt{\log n})$\ queries.

It has been conjectured that Chung et al.'s $\Omega\left(  \log n\right)  $
degree lower bound can be improved to $n^{\Omega\left(  1\right)  }$.
\ Previously, however, Gotsman and Linial \cite{gotsmanlinial} showed that
such an improvement would imply the notorious \textit{Sensitivity Conjecture}
in the study of Boolean functions.\ \ In Section \ref{SENSLB}, we observe that
an $n^{\Omega\left(  1\right)  }$\ lower bound for Chung et al.'s problem
would \textit{also} yield an $n^{\Omega\left(  1\right)  }$\ lower bound on
the bounded-error randomized and quantum query complexities of \textsc{Weak
Parity}, for all $\varepsilon>0$. \ Thus, while we do not have a direct
reduction between \textsc{Weak Parity} and the Sensitivity Conjecture in
either direction, it seems plausible that a breakthrough on one problem would
lead to a breakthrough on the other.

Next, in Section \ref{DQSEC}, we connect \textsc{Weak Parity}\ to another
longstanding open problem in the study of Boolean functions---and in this
case, we give a direct reduction. \ Namely, suppose we could prove a lower
bound of $\Omega\left(  n/\log^{1-c}\left(  1/\varepsilon\right)  \right)
$\ on the bounded-error randomized query complexity of \textsc{Weak Parity}.
\ We show that this would imply that $\operatorname*{R}_{2}\left(  f\right)
=\Omega\left(  \deg\left(  f\right)  ^{c}\right)  $\ for all total Boolean
functions $f:\left\{  0,1\right\}  ^{n}\rightarrow\left\{  0,1\right\}
$,\ where $\operatorname*{R}_{2}\left(  f\right)  $\ is the bounded-error
randomized query complexity of $f$, and $\deg\left(  f\right)  $\ is its exact
degree as a real polynomial. \ Similar statements hold for other kinds of
query complexity (e.g., the bounded-error quantum query complexity
$\operatorname*{Q}_{2}\left(  f\right)  $, and the zero-error randomized query
complexity $\operatorname*{R}_{0}\left(  f\right)  $).

Nisan \cite{nisan}\ showed that $\operatorname*{R}_{2}\left(  f\right)
=\Omega(\deg\left(  f\right)  ^{1/3})$ for all total Boolean functions $f$,
while Beals et al.\ \cite{bbcmw} showed that $\operatorname*{Q}_{2}\left(
f\right)  =\Omega(\deg\left(  f\right)  ^{1/6})$\ for all $f$. \ Meanwhile,
the largest known separations are $\operatorname*{R}_{2}\left(  f\right)
=O(\deg\left(  f\right)  ^{0.753\ldots})$\ if $f$ is the complete binary
AND/OR tree (see Section \ref{PRELIM}\ for a definition), and
$\operatorname*{Q}_{2}\left(  f\right)  =O(\sqrt{\deg\left(  f\right)  })$ if
$f$ is the $\operatorname*{OR}$\ function. \ However, even improving on the
$3^{rd}$- and $6^{th}$-power relations remains open. \ Our result says that,
if there existed Boolean functions $f$ with larger separations than are
currently known, then we \textit{could} improve our algorithms for
\textsc{Weak Parity}. \ And conversely, any randomized lower bound for
\textsc{Weak Parity}\ better than $\Omega(n/\log^{2/3}\left(  1/\varepsilon
\right)  )$, or any quantum lower bound\ better than $\Omega(n/\log
^{5/6}\left(  1/\varepsilon\right)  )$, would improve the known relations
between degree and query complexity for \textit{all} Boolean functions.

Lastly, in Section \ref{OTHER}, we briefly consider the weak query
complexities of functions other than \textsc{Parity}. \ We show that, for
\textit{every} Boolean function $f$, it is possible to agree with $f\left(
X\right)  $\ on $2^{n-1}+1$\ inputs $X$\ using a bounded-error quantum
algorithm that makes $O(\sqrt{n})$\ queries, or a zero-error randomized
algorithm that makes $O(n^{0.754})$\ queries, or a zero-error quantum
algorithm that makes $O(\sqrt{n}\log^{2}n)$\ queries.

\section{Preliminaries\label{PRELIM}}

We assume some familiarity with classical and quantum query complexity; see
Buhrman and de Wolf \cite{bw} for an excellent introduction. \ This section
reviews the most relevant definitions and facts.

\subsection{Classical Query Complexity\label{QUERY}}

Given a Boolean function $f:\left\{  0,1\right\}  ^{n}\rightarrow\left\{
0,1\right\}  $, the \textit{deterministic query complexity} $\operatorname*{D}%
\left(  f\right)  $ is the minimum number of queries made by any
deterministic, classical algorithm that computes $f\left(  X\right)  $\ for
every input $X\in\left\{  0,1\right\}  ^{n}$. \ (Here and throughout, a query
returns $x_{i}$\ given $i$, and the \textquotedblleft number of
queries\textquotedblright\ means the number maximized over all $X\in\left\{
0,1\right\}  ^{n}$.)

Also, the \textit{zero-error randomized query complexity} $\operatorname*{R}%
_{0}\left(  f\right)  $\ is the minimum number of queries made by any
randomized algorithm that computes $f\left(  X\right)  $\ with success
probability at least $2/3$\ for every $X$---and that, whenever it fails to
compute $f\left(  X\right)  $, instead outputs \textquotedblleft don't
know.\textquotedblright\ \ The \textit{bounded-error randomized query
complexity} $\operatorname*{R}_{2}\left(  f\right)  $\ is the minimum number
of queries made by a randomized algorithm that computes $f\left(  X\right)
$\ with success probability at least $2/3$ for every $X$, and that can behave
arbitrarily (for example, by outputting the wrong answer) when it fails. \ We
have the following relations for every $f$:%
\begin{equation}
n\geq\operatorname*{D}\left(  f\right)  \geq\operatorname*{R}\nolimits_{0}%
\left(  f\right)  \geq\operatorname*{R}\nolimits_{2}\left(  f\right)  .
\end{equation}
We could also have defined $\operatorname*{R}\nolimits_{0}\left(  f\right)
$\ as the minimum \textit{expected} number of queries made by a randomized
algorithm that computes $f\left(  X\right)  $\ with certainty for every input
$X$ (where the expectation is over the internal randomness of the algorithm,
and must be bounded for every $X$). \ We will sometimes use this
interpretation, which changes the value of $\operatorname*{R}\nolimits_{0}%
\left(  f\right)  $\ by at most a constant factor.

We will use the following well-known result:

\begin{theorem}
\label{drthm}$\operatorname*{D}\left(  f\right)  \leq\operatorname*{R}%
_{0}\left(  f\right)  ^{2}$ and $\operatorname*{D}\left(  f\right)
=O(\operatorname*{R}_{2}\left(  f\right)  ^{3})$ for all total Boolean
functions $f$.\footnote{The $\operatorname*{D}\left(  f\right)  \leq
\operatorname*{R}_{0}\left(  f\right)  ^{2}$\ part follows from the folklore
result that $\operatorname*{D}\left(  f\right)  \leq\operatorname*{C}\left(
f\right)  ^{2}$, where $\operatorname*{C}\left(  f\right)  $\ is the so-called
\textit{certificate complexity}, together with the fact that
$\operatorname*{R}_{0}\left(  f\right)  \geq\operatorname*{C}\left(  f\right)
$. \ The $\operatorname*{D}\left(  f\right)  =O(\operatorname*{R}_{2}\left(
f\right)  ^{3})$\ part was proved by Nisan \cite{nisan}. \ It also follows
from the result of Beals et al.\ \cite{bbcmw} that $\operatorname*{D}\left(
f\right)  \leq\operatorname*{bs}\left(  f\right)  ^{3}$, where
$\operatorname*{bs}\left(  f\right)  $\ is the \textit{block sensitivity} (see
Section \ref{SBS}), together with the fact that $\operatorname*{R}_{2}\left(
f\right)  =\Omega\left(  \operatorname*{bs}\left(  f\right)  \right)  $.}
\end{theorem}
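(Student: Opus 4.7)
The plan is to prove each inequality separately by routing through a well-chosen intermediate complexity measure: certificate complexity $\operatorname{C}(f)$ for the zero-error bound, and block sensitivity $\operatorname{bs}(f)$ for the bounded-error bound.

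For the first inequality $\operatorname{D}(f)\leq\operatorname{R}_0(f)^2$, I would first show $\operatorname{R}_0(f)\geq\operatorname{C}(f)$. The point is that whenever a zero-error randomized algorithm outputs a definite answer $b$, the bits it has queried along that branch must form a $b$-certificate: otherwise there would exist an input consistent with the queried values but with $f$-value $1-b$, violating the zero-error guarantee. Taking the branch maximizing query count, this forces the worst-case query count to be at least $\operatorname{C}(f)$. Next I would establish the folklore bound $\operatorname{D}(f)\leq\operatorname{C}_0(f)\cdot\operatorname{C}_1(f)\leq\operatorname{C}(f)^2$ by induction on $\operatorname{C}_1(f)$. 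The algorithm picks any 1-input $X$, takes a minimum 1-certificate $S$ for $X$, queries every bit in $S$, outputs $1$ if all queried values match $X|_S$, and otherwise restricts to the subfunction on which at least one bit disagrees with $S$. Each such disagreement destroys the $1$-certificate $S$, so in every recursive subproblem the maximum $1$-certificate length drops by at least one, while each level costs at most $\operatorname{C}_0(f)$ queries.

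For the second inequality $\operatorname{D}(f)=O(\operatorname{R}_2(f)^3)$, I would split it as $\operatorname{D}(f)=O(\operatorname{bs}(f)^3)$ combined with $\operatorname{R}_2(f)=\Omega(\operatorname{bs}(f))$. The first half uses Nisan's lemma $\operatorname{C}(f)\leq\operatorname{bs}(f)^2$, proved by fixing an input $X$, greedily choosing a maximal disjoint family of minimal sensitive blocks at $X$, and showing the union of these blocks is a certificate; together with $\operatorname{D}(f)\leq\operatorname{bs}(f)\cdot\operatorname{C}(f)$, obtained by iteratively querying certificates until no sensitive block remains, this yields the cubic deterministic bound. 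The second half, $\operatorname{R}_2(f)=\Omega(\operatorname{bs}(f))$, follows from a Yao-style adversary argument: fix $X$ with disjoint sensitive blocks $B_1,\ldots,B_b$, and consider the distribution that is $X$ with probability $1/2$ and $X^{B_i}$ for uniformly random $i$ with probability $1/2$; any deterministic $T$-query algorithm intersects the random block in expectation $T/b$ bits, so if $T=o(b)$ it fails to distinguish $X$ from $X^{B_i}$ for most $i$ and cannot achieve success probability $2/3$.

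The main obstacle is Nisan's $\operatorname{C}(f)\leq\operatorname{bs}(f)^2$ step. One must verify carefully that the union of a maximal disjoint family of \emph{minimal} sensitive blocks is actually a certificate; the key subtlety is that any hypothetical further flip outside this union, producing a change in $f$, would yield a new minimal sensitive block disjoint from all previously chosen ones, contradicting maximality. Once this combinatorial lemma is in hand, everything else is a straightforward induction, a greedy certificate-querying procedure, or a routine adversary counting calculation, and the two halves combine cleanly to give the stated bounds.
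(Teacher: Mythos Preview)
Your approach is exactly the one the paper indicates in its footnote: route the first inequality through certificate complexity via $\operatorname{D}(f)\le\operatorname{C}(f)^2$ and $\operatorname{R}_0(f)\ge\operatorname{C}(f)$, and the second through block sensitivity via $\operatorname{D}(f)\le\operatorname{bs}(f)^3$ and $\operatorname{R}_2(f)=\Omega(\operatorname{bs}(f))$.

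Two details of your sketch would not survive a careful write-up, though. In the $\operatorname{D}(f)\le\operatorname{C}_0(f)\operatorname{C}_1(f)$ argument you query a $1$-certificate $S$, so each level costs at most $\operatorname{C}_1(f)$ queries, not $\operatorname{C}_0(f)$; and what drops in the recursion is the $0$-certificate complexity of the restricted subfunction (every $0$-certificate must intersect $S$ and hence loses a bit once $S$ is fixed), not the $1$-certificate length. You have the two roles swapped, although the product bound is of course unaffected. In the $\operatorname{C}(f)\le\operatorname{bs}(f)^2$ step, proving that the union $\bigcup_i B_i$ of a maximal disjoint family of minimal sensitive blocks is a certificate only yields $\operatorname{C}^X(f)\le\sum_i|B_i|$; to finish you also need $|B_i|\le\operatorname{s}(f)\le\operatorname{bs}(f)$, which holds because for a \emph{minimal} sensitive block $B_i$ the input $X^{B_i}$ is sensitive at every coordinate of $B_i$. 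With these fixes in place your outline goes through and coincides with the standard proofs the paper cites.
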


We will write $\operatorname*{R}_{2}($\textsc{WeakPar}$_{n,\varepsilon})$\ to
denote the minimum number of queries made by any randomized algorithm that,
for at least a $1/2+\varepsilon$\ fraction of inputs $X\in\left\{
0,1\right\}  ^{n}$, outputs \textsc{Par}$\left(  X\right)  $\ with probability
at least $2/3$. \ We will also write $\operatorname*{R}_{0}($\textsc{WeakPar}%
$_{n,\varepsilon})$\ to denote the minimum number of queries made by any
randomized algorithm that, for at least a $1/2+\varepsilon$\ fraction of
inputs $X$, outputs \textsc{Par}$\left(  X\right)  $\ with probability at
least $2/3$, and otherwise outputs \textquotedblleft don't
know.\textquotedblright\ \ In both cases, for the remaining inputs $X$ (i.e.,
those on which the algorithm fails), the algorithm's output behavior can be
arbitrary, but the upper bound on query complexity must hold for \textit{all}
inputs $X\in\left\{  0,1\right\}  ^{n}$.

Note that we could also define $\operatorname*{R}_{0}^{\prime}($%
\textsc{WeakPar}$_{n,\varepsilon})$\ as the minimum \textit{expected} number
of queries made by any randomized algorithm that, for at least a
$1/2+\varepsilon$\ fraction of inputs $X$, outputs \textsc{Par}$\left(
X\right)  $\ with probability $1$. \ In this case, the expected number of
queries needs to be bounded only for those $X$'s on which the algorithm
succeeds. \ For completeness, let us verify the following.

\begin{proposition}
\label{r0r0}$\operatorname*{R}_{0}($\textsc{WeakPar}$_{n,\varepsilon})$ and
$\operatorname*{R}_{0}^{\prime}($\textsc{WeakPar}$_{n,\varepsilon})$\ are
equal up to constant factors.
\end{proposition}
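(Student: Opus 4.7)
The plan is to prove each direction of the equivalence $\operatorname*{R}_{0}^{\prime}($\textsc{WeakPar}$_{n,\varepsilon})=\Theta(\operatorname*{R}_{0}($\textsc{WeakPar}$_{n,\varepsilon}))$ separately, using the standard round-trip between worst-case-query bounded-error algorithms and expected-query zero-error algorithms. Nothing specific to \textsc{WeakPar}\ enters the argument; only the shapes of the two definitions matter, namely that each designates a good set of inputs of measure $\geq1/2+\varepsilon$, and either demands zero-error behavior with a ``don't know'' escape (for $\operatorname*{R}_{0}$) or probability-$1$ correctness with an expected-query bound (for $\operatorname*{R}_{0}^{\prime}$).

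For the upper bound $\operatorname*{R}_{0}^{\prime}\leq O(\operatorname*{R}_{0})$, I would take an $\operatorname*{R}_{0}$-algorithm $A$ with worst-case query bound $T$ and good set $G\subseteq\{0,1\}^{n}$ of size $\geq(1/2+\varepsilon)2^{n}$, and form a new algorithm that repeatedly runs fresh independent copies of $A$ until one returns an answer other than ``don't know.'' On each $X\in G$ the number of trials is geometric with mean $\leq3/2$ (since each trial succeeds with probability $\geq2/3$), so the expected number of queries is $\leq 3T/2$; and since $A$ is zero-error on $G$, the output equals \textsc{Par}$(X)$ with probability $1$. On inputs outside $G$ the procedure may never halt, but the $\operatorname*{R}_{0}^{\prime}$ definition imposes no query bound on such inputs, so this certifies $\operatorname*{R}_{0}^{\prime}\leq O(T)$.

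For the reverse bound $\operatorname*{R}_{0}\leq O(\operatorname*{R}_{0}^{\prime})$, I would take an $\operatorname*{R}_{0}^{\prime}$-algorithm $B$ with expected query count $\leq T^{\prime}$ on each input of a good set $G^{\prime}$ of size $\geq(1/2+\varepsilon)2^{n}$, and simply truncate: halt $B$ after $3T^{\prime}$ queries on any input and output ``don't know'' if it has not terminated by then. By Markov's inequality, for each $X\in G^{\prime}$ the probability that $B$ exceeds $3T^{\prime}$ queries is at most $1/3$; hence on $G^{\prime}$ the truncated algorithm outputs \textsc{Par}$(X)$ with probability $\geq2/3$ and otherwise ``don't know'' (using the zero-error guarantee of $B$), while using $\leq 3T^{\prime}$ queries in the worst case on \emph{every} input. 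This yields $\operatorname*{R}_{0}\leq3T^{\prime}$.

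I expect no serious obstacle. The one mild point to keep in mind is that on inputs outside the respective good sets both transformations behave poorly---the repetition may loop forever and the truncation may discard correct answers---but the $\operatorname*{R}_{0}$ and $\operatorname*{R}_{0}^{\prime}$ definitions both allow arbitrary output behavior on bad inputs, and the truncation enforces the worst-case query bound required by $\operatorname*{R}_{0}$ uniformly, so neither phenomenon obstructs the argument.
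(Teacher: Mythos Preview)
Your proposal is correct and follows essentially the same approach as the paper's proof: repeat the $\operatorname*{R}_{0}$-algorithm until it answers to obtain the $\operatorname*{R}_{0}^{\prime}$ bound, and truncate the $\operatorname*{R}_{0}^{\prime}$-algorithm at $3T'$ queries with Markov's inequality to obtain the $\operatorname*{R}_{0}$ bound. The paper's argument is just slightly terser but otherwise identical in structure and content.
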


\begin{proof}
Let $A$ be a randomized algorithm that realizes $\operatorname*{R}_{0}%
($\textsc{WeakPar}$_{n,\varepsilon})\leq T$. \ Then we can simply run $A$
repeatedly, until it outputs either $0$ or $1$. \ This will yield an algorithm
that, for at least a $1/2+\varepsilon$\ fraction of inputs $X\in\left\{
0,1\right\}  ^{n}$, outputs \textsc{Par}$\left(  X\right)  $\ with certainty
after $O\left(  T\right)  $\ queries in expectation. \ (The algorithm might
not halt for the remaining $X$'s, but that's okay.)

Conversely, let $A^{\prime}$\ be a randomized algorithm that realizes
$\operatorname*{R}_{0}^{\prime}($\textsc{WeakPar}$_{n,\varepsilon})\leq T$.
\ Then we can run $A^{\prime}$\ until it's either halted or made
$3T$\ queries, and can output \textquotedblleft don't know\textquotedblright%
\ in the latter case. \ By Markov's inequality, this will yield an algorithm
that, for at least a $1/2+\varepsilon$\ fraction of inputs $X$, outputs
\textsc{Par}$\left(  X\right)  $\ with probability at least $2/3$, and
otherwise outputs \textquotedblleft don't know.\textquotedblright%
\ \ Furthermore, the number of queries will be bounded by $3T$\ for every $X$.
\end{proof}

\subsection{Quantum Query Complexity\label{QQUERY}}

The\ \textit{zero-error quantum query complexity} $\operatorname*{Q}%
_{0}\left(  f\right)  $\ is the minimum number of queries made by any quantum
algorithm that computes $f\left(  X\right)  $\ with success probability at
least $2/3$, for every input $X$---and that, whenever it fails to compute
$f\left(  X\right)  $, instead outputs \textquotedblleft don't
know.\textquotedblright\ \ Here a query maps each computational basis state of
the form $\left\vert i,b,z\right\rangle $ to a basis state of the form
$\left\vert i,b\oplus x_{i},z\right\rangle $, where $z$ is a \textquotedblleft
workspace register\textquotedblright\ whose dimension can be arbitrary. \ The
final output ($0$, $1$, or \textquotedblleft don't know\textquotedblright) is
obtained by measuring a designated part of $z$. \ The \textit{bounded-error
randomized query complexity} $\operatorname*{Q}_{2}\left(  f\right)  $\ is the
minimum number of queries made by a quantum algorithm that computes $f\left(
X\right)  $\ with success probability at least $2/3$ for every $X$, and whose
output can be arbitrary when it fails. \ We have the following relations for
every $f$:%
\begin{equation}
\operatorname*{R}\nolimits_{0}\left(  f\right)  \geq\operatorname*{Q}%
\nolimits_{0}\left(  f\right)  \geq\operatorname*{Q}\nolimits_{2}\left(
f\right)  ,~~~~\operatorname*{R}\nolimits_{2}\left(  f\right)  \geq
\operatorname*{Q}\nolimits_{2}\left(  f\right)  .
\end{equation}
Like in the randomized case, we can also interpret $\operatorname*{Q}%
\nolimits_{0}\left(  f\right)  $\ as the minimum \textit{expected} number of
queries made by a quantum algorithm that computes $f\left(  X\right)  $\ with
certainty for every input $X$, if we generalize the quantum query model to
allow intermediate measurements. \ Doing so changes $\operatorname*{Q}%
\nolimits_{0}\left(  f\right)  $\ by at most a constant factor.

We will use the following results of Beals et al.\ \cite{bbcmw}\ and
Midrijanis \cite{midrijanis:drq} respectively:

\begin{theorem}
[Beals et al.\ \cite{bbcmw}]\label{bbcmwthm}$\operatorname*{D}\left(
f\right)  =O(\operatorname*{Q}_{2}\left(  f\right)  ^{6})$ for all total
Boolean $f$.
\end{theorem}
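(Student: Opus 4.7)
The plan is to build a chain of inequalities that starts from the quantum query complexity $Q_2(f)$ and ends at the deterministic query complexity $D(f)$, picking up a power of $6$ along the way. The three links in the chain are: approximate degree, block sensitivity, and then deterministic query complexity.

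First, I would invoke the \emph{polynomial method} of Beals et al. Recall that a $T$-query quantum algorithm's acceptance probability $p(X)$ is a multilinear polynomial in $x_1,\ldots,x_n$ of degree at most $2T$ (this is essentially the same fact already cited in remark (v) of the introduction). Consequently, if $Q_2(f)=T$ then $f$ has a polynomial approximating it pointwise to within $1/3$ of degree at most $2T$. This gives $\widetilde{\deg}(f) \leq 2\,Q_2(f)$, where $\widetilde{\deg}(f)$ is the approximate degree.

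Next I would appeal to the Nisan--Szegedy relation $\operatorname{bs}(f) = O(\widetilde{\deg}(f)^2)$ between block sensitivity and approximate degree. The proof uses a standard Markov-brothers-style inequality applied to a univariate polynomial obtained by restricting the approximating polynomial to a line through a maximally block-sensitive input. Finally I would use the relation $D(f) = O(\operatorname{bs}(f)^3)$, which is itself proved by iteratively finding disjoint sensitive blocks and arguing that certificates of near-optimal size exist, together with the folklore $D(f) \leq C(f) \cdot \operatorname{bs}(f) \leq \operatorname{bs}(f) \cdot \operatorname{bs}(f)^2$ type argument. Chaining these three gives
\begin{equation*}
D(f) = O(\operatorname{bs}(f)^3) = O(\widetilde{\deg}(f)^6) = O(Q_2(f)^6),
\end{equation*}
which is the desired bound.

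None of the three steps is hard in isolation; the only subtle point is the middle step $\operatorname{bs}(f) = O(\widetilde{\deg}(f)^2)$, since it is the only place one uses something nontrivial about real polynomials (the Markov brothers inequality applied to a suitably chosen restriction). The other two steps are essentially combinatorial. Since all of these ingredients are available in the literature (Beals et al.\ \cite{bbcmw} themselves, and Nisan--Szegedy), the proof is essentially just assembling them in the right order and keeping track of constants.
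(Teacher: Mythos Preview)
Your proposal is correct and matches the argument the paper itself sketches (in the footnote in Section~\ref{SBS}): the paper records that Beals et al.\ prove Theorem~\ref{bbcmwthm} by combining $\operatorname*{D}(f)\leq\operatorname*{bs}(f)^3$ with $\operatorname*{Q}_2(f)=\Omega(\sqrt{\operatorname*{bs}(f)})$, and your chain through $\widetilde{\deg}(f)$ is exactly how the latter inequality is established. The only difference is cosmetic---you make the approximate-degree step explicit, whereas the paper collapses the polynomial-method and Nisan--Szegedy steps into the single statement $\operatorname*{bs}(f)=O(\operatorname*{Q}_2(f)^2)$.
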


\begin{theorem}
[Midrijanis \cite{midrijanis:drq}]\label{midrijanisthm}$\operatorname*{D}%
\left(  f\right)  =O(\operatorname*{Q}_{0}\left(  f\right)  ^{3})$\ for all
total Boolean $f$.\footnote{This improved the result of Buhrman et
al.\ \cite{bcwz}\ that $\operatorname*{D}\left(  f\right)
=O(\operatorname*{Q}_{0}\left(  f\right)  ^{4})$, as well as the result of
Aaronson \cite{aar:cer}\ that $\operatorname*{R}\nolimits_{0}\left(  f\right)
=O(\operatorname*{Q}_{0}\left(  f\right)  ^{3}\log n)$.}
\end{theorem}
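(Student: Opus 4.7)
The natural approach is via the polynomial method combined with a combinatorial bound on deterministic complexity. Let $T = \operatorname*{Q}_0(f)$, and let $p(X), q(X)$ denote the probabilities that a given $T$-query zero-error quantum algorithm outputs $1$ and $0$, respectively. By the Beals et al.\ polynomial representation, both $p$ and $q$ are multilinear polynomials of degree at most $2T$, and the zero-error condition forces $p$ to vanish \emph{exactly} on $f^{-1}(0)$ with $p \ge 2/3$ on $f^{-1}(1)$, and symmetrically for $q$.

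The plan then proceeds in two steps. \emph{Step 1}: show that the exact-zero structure forces $\deg(f) = O(T)$, by exploiting that $p$ and $f$ have the same zero set on the Boolean cube. The hope is to apply a univariate polynomial of constant degree to $p$, mapping $0$ to $0$ and the finite set of values $\{p(X) : X \in f^{-1}(1)\} \subseteq [2/3, 1]$ to $1$, thereby producing a polynomial of degree comparable to $\deg(p) \le 2T$ that agrees with $f$ on all Boolean inputs. \emph{Step 2}: invoke (or re-derive) Midrijanis's own strengthening of the Nisan--Smolensky inequality, $\operatorname*{D}(f) = O(\deg(f)^3)$, an improvement over the classical $\deg(f)^4$ bound, which chains with Step 1 to yield the theorem.

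The main obstacle is Step 1: the values $p(X)$ takes on $f^{-1}(1)$ can populate a dense subset of $[2/3, 1]$, obstructing any naive ``rounding'' by a low-degree univariate polynomial. Nor can one simply route through block sensitivity, since for $f = \mathrm{OR}_n$ one has $\operatorname*{bs}(f) = n$ while $\operatorname*{Q}_0(f) = \Theta(\sqrt{n})$, so $\operatorname*{bs}(f)$ is genuinely not $O(T)$ in general; the bounded-error route through $\operatorname*{bs}$ combined with Beals et al.'s $\operatorname*{D}(f) = O(\operatorname*{bs}(f)^3)$ can only recover the weaker $\operatorname*{D}(f) = O(T^6)$ bound, and plain Minsky--Papert symmetrization plus Markov's inequality---which only sees ``$p$ is small'' rather than ``$p$ is zero''---suffers the same loss. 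Genuinely extracting the extra structural information from the \emph{exact} vanishing of $p$, either by a careful one-sided polynomial-method argument bounding $\deg(f)$ directly or by constructing a deterministic decision tree from $p$ without passing through $\deg(f)$, is the crux of Midrijanis's argument and is the step I would spend the bulk of my effort on.
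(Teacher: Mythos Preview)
The paper does not prove Theorem~\ref{midrijanisthm}; it is quoted as a known result. But your proposal has a genuine gap regardless: Step~1, establishing $\deg(f)=O(\operatorname*{Q}_0(f))$, is not merely difficult but \emph{false}. The complete binary AND/OR tree from this very paper is a counterexample: Section~\ref{AOT} records $\deg(\operatorname*{T}_d)=n$, while Theorem~\ref{acltthm} gives $\operatorname*{Q}_0(\operatorname*{T}_d)=O(\sqrt{n}\log^2 n)$. So no argument exploiting the exact vanishing of $p$ can possibly yield $\deg(f)=O(\operatorname*{Q}_0(f))$, and chaining with your (correct) Step~2 bound $\operatorname*{D}(f)=O(\deg(f)^3)$ cannot recover the theorem.

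Your supporting example is also in error: you assert $\operatorname*{Q}_0(\operatorname*{OR}_n)=\Theta(\sqrt{n})$, confusing it with $\operatorname*{Q}_2$. In fact $\operatorname*{Q}_0(\operatorname*{OR}_n)=\Theta(n)$, since the polynomial $p_0(X)=\Pr[\text{output }0]$ must vanish on every $X\neq 0^n$ while satisfying $p_0(0^n)\geq 2/3$, and the unique multilinear polynomial with that support is a scalar multiple of $\prod_i(1-x_i)$, of degree $n$. Your instinct that the exact-zero structure of $p_0,p_1$ is the key extra information is right; the mistake is what you think it buys. It yields low \emph{nondeterministic} polynomial degree---polynomials of degree at most $2\operatorname*{Q}_0(f)$ that are nonzero precisely on $f^{-1}(1)$ (respectively $f^{-1}(0)$)---not low $\deg(f)$, and any cubic bound must exploit these one-sided certifying polynomials together with $\operatorname*{bs}(f)=O(\operatorname*{Q}_0(f)^2)$ directly, rather than detouring through $\deg(f)$.
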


Just like in the randomized case, we will write $\operatorname*{Q}_{2}%
($\textsc{WeakPar}$_{n,\varepsilon})$ for the minimum number of queries made
by any quantum algorithm that, for at least a $1/2+\varepsilon$\ fraction of
inputs $X$, outputs \textsc{Par}$\left(  X\right)  $\ with probability at
least $2/3$; and will write $\operatorname*{Q}_{0}($\textsc{WeakPar}%
$_{n,\varepsilon})$\ for the minimum number of queries made by any quantum
algorithm that, for at least a $1/2+\varepsilon$\ fraction of $X$'s, outputs
\textsc{Par}$\left(  X\right)  $\ with probability at least $2/3$, and
otherwise outputs \textquotedblleft don't know.\textquotedblright

Once again, if we generalize the quantum query model to allow intermediate
measurements, then we can also define $\operatorname*{Q}_{0}($\textsc{WeakPar}%
$_{n,\varepsilon})$\ as the minimum \textit{expected} number of queries made
by any quantum algorithm that, for at least a $1/2+\varepsilon$\ fraction of
$X$'s, outputs \textsc{Par}$\left(  X\right)  $\ with probability $1$ (with
the expected number of queries bounded only for those $X$'s on which the
algorithm succeeds). \ Doing so changes $\operatorname*{Q}_{0}($%
\textsc{WeakPar}$_{n,\varepsilon})$\ by at most a constant factor, for the
same reasons as in Proposition \ref{r0r0}.

\subsection{Degree\label{DEGREE}}

Given a Boolean function $f$, the \textit{degree} $\deg\left(  f\right)  $\ is
the degree of the (unique) real multilinear polynomial $p:\mathbb{R}%
^{n}\rightarrow\mathbb{R}$\ that satisfies $p\left(  X\right)  =f\left(
X\right)  $\ for all $X\in\left\{  0,1\right\}  ^{n}$. \ Degree has a known
combinatorial characterization that will be useful to us:\footnote{For a proof
of this characterization, see for example Aaronson \cite{aar:bf}.}

\begin{proposition}
[folklore]\label{degreeprop}Given a $d$-dimensional subcube $S$\ in $\left\{
0,1\right\}  ^{n}$, let $S_{0},S_{1}$\ be the subsets of $S$\ with even and
odd Hamming weight respectively (thus $\left\vert S_{0}\right\vert =\left\vert
S_{1}\right\vert =2^{d-1}$). \ Also, given a Boolean function $f:\left\{
0,1\right\}  ^{n}\rightarrow\left\{  0,1\right\}  $, call $f$
\textquotedblleft parity-correlated\textquotedblright\ on $S$\ if%
\begin{equation}
\left\vert \left\{  X\in S_{0}:f\left(  X\right)  =1\right\}  \right\vert
\neq\left\vert \left\{  X\in S_{1}:f\left(  X\right)  =1\right\}  \right\vert
.
\end{equation}
Then $\deg\left(  f\right)  $\ equals the maximum dimension of a subcube on
which $f$ is parity-correlated.
\end{proposition}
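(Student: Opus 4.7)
The plan is to translate the statement into Fourier analysis over the hypercube. I would first switch to the $\{\pm 1\}$ encoding via $x_i = 1 - 2b_i$; the multilinear polynomial representing $f$ becomes its Fourier expansion $f(x) = \sum_{T \subseteq [n]} \hat f(T)\prod_{i \in T} x_i$, and because this substitution is a degree-preserving invertible linear change of variables, $\deg(f)$ equals $\max\{|T| : \hat f(T)\neq 0\}$. Under this encoding, $\prod_i x_i = (-1)^{|X|}$, so the parity-correlation condition on a subcube $S$ is equivalent to the signed sum $\sum_{X \in S} f(X)\prod_i x_i$ being nonzero.

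Next, fix the free coordinate set $F \subseteq [n]$ with $|F|=d$ and the fixed values $x_i = y_i$ for $i \notin F$. Since $\prod_i x_i = \prod_{i \in F} x_i \cdot \prod_{i \notin F} y_i$ on $S$ and the second factor is a nonzero $\pm 1$, parity-correlation is equivalent to
\[
\sum_{x_F \in \{\pm 1\}^F} f(x_F, y) \prod_{i \in F} x_i \neq 0.
\]
Plugging in the Fourier expansion, using $x_i^2 = 1$, and applying the orthogonality identity $\sum_{x_F \in \{\pm 1\}^F} \prod_{i \in A} x_i = 2^d \cdot [A = \emptyset]$ collapses the double sum: only the Fourier terms with $T \supseteq F$ contribute, and writing such $T$ as $F \cup V$ with $V \subseteq [n]\setminus F$ gives
\[
2^d \sum_{V \subseteq [n]\setminus F} \hat f(F \cup V) \prod_{i \in V} y_i.
\]
So $f$ is parity-correlated on $S$ if and only if this $y$-dependent polynomial is nonzero.

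Both directions of the proposition now fall out. If $f$ is parity-correlated on some $d$-dimensional subcube, then some $\hat f(F \cup V)$ is nonzero with $|F \cup V| \geq d$, hence $\deg(f) \geq d$. Conversely, choose a maximal-support Fourier coefficient $\hat f(T) \neq 0$ with $|T| = \deg(f)$, and set $F := T$. For every nonempty $V \subseteq [n]\setminus F$ we have $|F \cup V| > \deg(f)$, so $\hat f(F \cup V) = 0$; the sum collapses to its $V = \emptyset$ term $\hat f(T) \neq 0$ regardless of the fixed values $y$. Hence every choice of $y$ yields a subcube of dimension $\deg(f)$ on which $f$ is parity-correlated.

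There is no serious obstacle here; the argument is a routine orthogonality computation. The only subtlety is the bookkeeping between the Fourier support $T$, the free coordinate set $F$, and the fixed values $y$, together with verifying that the orthogonality projection kills every monomial whose support does not contain $F$ entirely.
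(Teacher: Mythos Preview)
Your proof is correct and is the standard Fourier-analytic argument for this characterization. The paper itself does not give a proof of this proposition; it merely cites an external reference in a footnote, so there is nothing to compare against beyond noting that your approach is exactly the one a reader would expect.
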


It is not hard to see that $\deg\left(  f\right)  \leq\operatorname*{D}\left(
f\right)  $ for all Boolean functions $f$. \ Combined with Theorems
\ref{drthm} and \ref{bbcmwthm}, this\ implies that $\operatorname*{R}%
_{2}\left(  f\right)  =\Omega(\deg\left(  f\right)  ^{1/3})$\ and
$\operatorname*{Q}_{2}\left(  f\right)  =\Omega(\deg\left(  f\right)  ^{1/6}%
)$, as stated in Section \ref{RESULTS}.

\subsection{Sensitivity and Block Sensitivity\label{SBS}}

Given an input $X\in\left\{  0,1\right\}  ^{n}$ and a subset $B\subseteq
\left[  n\right]  $, let $X^{B}$\ denote $X$\ with all the bits in $B$
flipped. \ Then for a Boolean function $f$, the \textit{sensitivity}
$\operatorname*{s}^{X}\left(  f\right)  $\ is the number of indices
$i\in\left[  n\right]  $\ such that $f\left(  X^{\left\{  i\right\}  }\right)
\neq f\left(  X\right)  $, while the \textit{block sensitivity}
$\operatorname*{bs}^{X}\left(  f\right)  $\ is the maximum number of
pairwise-disjoint \textquotedblleft blocks\textquotedblright\ $B_{1}%
,\ldots,B_{k}\subseteq\left[  n\right]  $\ that can be found such that
$f\left(  X^{B_{j}}\right)  \neq f\left(  X\right)  $\ for all $j\in\left[
k\right]  $. \ We then define%
\begin{equation}
\operatorname*{s}\left(  f\right)  :=\max_{X\in\left\{  0,1\right\}  ^{n}%
}\operatorname*{s}\nolimits^{X}\left(  f\right)  ,~~~~~\operatorname*{bs}%
\left(  f\right)  :=\max_{X\in\left\{  0,1\right\}  ^{n}}\operatorname*{bs}%
\nolimits^{X}\left(  f\right)  .
\end{equation}
Clearly $\operatorname*{s}\left(  f\right)  \leq\operatorname*{bs}\left(
f\right)  $. \ The famous \textit{Sensitivity Conjecture} (see Hatami et
al.\ \cite{hatami}\ for a survey)\ asserts that the gap between
$\operatorname*{s}\left(  f\right)  $\ and $\operatorname*{bs}\left(
f\right)  $\ is never more than polynomial:\footnote{Rubinstein
\cite{rubinstein} showed that $\operatorname*{bs}\left(  f\right)
$\ \textit{can} be quadratically larger than $\operatorname*{s}\left(
f\right)  $.}

\begin{conjecture}
[Sensitivity Conjecture]\label{sensconj}There exists a polynomial $p$ such
that $\operatorname*{bs}\left(  f\right)  \leq p\left(  \operatorname*{s}%
\left(  f\right)  \right)  $ for all $f$.
\end{conjecture}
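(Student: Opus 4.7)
The plan is to attack Conjecture \ref{sensconj} through the equivalence of Gotsman and Linial \cite{gotsmanlinial} referenced earlier in the excerpt: strengthening Chung et al.'s $\Omega(\log n)$ bound on the maximum degree of any induced subgraph of $\{0,1\}^n$ on more than $2^{n-1}$ vertices to an $n^{\Omega(1)}$ bound would imply the Sensitivity Conjecture. (The chain is $\operatorname*{s}(f) \geq \deg(f)^{\Omega(1)}$ from the combinatorial bound, combined with the Nisan--Szegedy inequality $\operatorname*{bs}(f) \leq O(\deg(f)^2)$, yielding a polynomial relation between $\operatorname*{bs}(f)$ and $\operatorname*{s}(f)$.) So the entire problem reduces to a purely combinatorial statement about the hypercube graph $Q_n$.

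My approach to that combinatorial statement is spectral. I aim to construct a real symmetric matrix $A \in \{-1,0,+1\}^{2^n \times 2^n}$ whose support is exactly the edge set of $Q_n$ (a $\pm 1$-signing of the hypercube's adjacency matrix) and which satisfies $A^2 = n I$. Any such $A$ has all eigenvalues equal to $\pm\sqrt{n}$, each with multiplicity $2^{n-1}$.

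Given such a signing the combinatorial conclusion closes in a few lines. Let $H$ be any induced subgraph on $m > 2^{n-1}$ vertices, and let $A_H$ be the corresponding $m \times m$ principal submatrix of $A$. By Cauchy interlacing, $\lambda_{\max}(A_H) \geq \lambda_{2^n - m + 1}(A)$; since $A$ has $2^{n-1}$ eigenvalues equal to $+\sqrt{n}$ and $2^n - m + 1 \leq 2^{n-1}$, the right-hand side is $+\sqrt{n}$. On the other hand the spectral radius of $A_H$ is bounded by its maximum absolute row sum, which equals the maximum degree $\Delta(H)$ since every nonzero entry is $\pm 1$ and sits on an edge of $H$. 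Hence $\Delta(H) \geq \sqrt{n}$, comfortably of the form $n^{\Omega(1)}$, completing the reduction.

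The main obstacle is exhibiting the signing. A natural recursive candidate, matching the decomposition of $Q_n$ into two copies of $Q_{n-1}$ joined by a perfect matching, is
\[
A_1 = \begin{pmatrix} 0 & 1 \\ 1 & 0 \end{pmatrix}, \qquad A_n = \begin{pmatrix} A_{n-1} & I \\ I & -A_{n-1} \end{pmatrix}.
\]
One checks by induction that the diagonal blocks of $A_n^2$ equal $A_{n-1}^2 + I = (n-1)I + I = nI$ while the off-diagonal blocks equal $A_{n-1} - A_{n-1} = 0$, so $A_n^2 = nI$. Once this matrix is on the table the rest of the argument is essentially free; the real difficulty --- and the reason Conjecture \ref{sensconj} has stood open for so long --- lies in recognizing that such a clean signing of the hypercube exists at all.
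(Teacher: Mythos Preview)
There is nothing in the paper to compare your argument against: Conjecture~\ref{sensconj} is stated in the paper precisely \emph{as a conjecture}, and no proof is offered or claimed. At the time of writing the Sensitivity Conjecture was open; the paper's contribution on this front is only to relate \textsc{Weak Parity} to it via $\Lambda(n)$ and the Gotsman--Linial equivalence (Theorem~\ref{glthm}), and to note that proving $\Lambda(n)\geq n^{\Omega(1)}$ would settle it.

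That said, your argument is correct. What you have written is Huang's 2019 resolution of the conjecture: the recursive $\pm 1$ signing $A_n$ of the hypercube adjacency matrix satisfying $A_n^2 = nI$, Cauchy interlacing to force $\lambda_{\max}(A_H)\geq\sqrt{n}$ for any principal submatrix on more than $2^{n-1}$ rows, and the row-sum bound $\lambda_{\max}(A_H)\leq\Delta(H)$. This gives $\Lambda(n)\geq\sqrt{n}$, whence by Theorem~\ref{glthm} one gets $\operatorname*{s}(f)\geq\sqrt{\deg(f)}$ (up to the obvious off-by-one), and then $\operatorname*{bs}(f)\leq\deg(f)^2\leq\operatorname*{s}(f)^4$ as you indicate. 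So your proposal genuinely proves the conjecture; it simply postdates the paper, which treats the statement as open.
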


Nisan and Szegedy \cite{ns}\ showed that $\operatorname*{bs}\left(  f\right)
\leq2\deg\left(  f\right)  ^{2}$ (recently improved by Tal \cite{tal}\ to
$\operatorname*{bs}\left(  f\right)  \leq\deg\left(  f\right)  ^{2}$), while
Beals et al.\ \cite{bbcmw} showed that $\deg\left(  f\right)  \leq
\operatorname*{bs}\left(  f\right)  ^{3}$.\footnote{This follows immediately
from their result that $\operatorname*{D}\left(  f\right)  \leq
\operatorname*{bs}\left(  f\right)  ^{3}$, which improved on the bound
$\operatorname*{D}\left(  f\right)  \leq\operatorname*{bs}\left(  f\right)
^{4}$\ due to Nisan and Szegedy \cite{ns}, and which they then combined with
the result $\operatorname*{Q}_{2}\left(  f\right)  =\Omega(\sqrt
{\operatorname*{bs}\left(  f\right)  })$\ to prove Theorem \ref{bbcmwthm},
that $\operatorname*{D}\left(  f\right)  =O(\operatorname*{Q}_{2}\left(
f\right)  ^{6})$.} \ Thus, degree and block sensitivity are polynomially
related. \ This implies that Conjecture \ref{sensconj} is equivalent to the
conjecture that sensitivity\ is polynomially related to degree.

\subsection{AND/OR Tree\label{AOT}}

A particular Boolean function of interest to us will be the \textit{complete
binary AND/OR tree}. \ Assume $n=2^{d}$; then this function is defined
recursively as follows:%
\begin{align}
\operatorname*{T}\nolimits_{0}\left(  x\right)   &  :=x,\\
\operatorname*{T}\nolimits_{d}\left(  x_{1},\ldots,x_{n}\right)   &
:=\left\{
\begin{array}
[c]{cc}%
\operatorname*{T}\nolimits_{d-1}\left(  x_{1},\ldots,x_{n/2}\right)
\operatorname*{AND}\operatorname*{T}\nolimits_{d-1}\left(  x_{n/2+1}%
,\ldots,x_{n}\right)  & \text{if }d>0\text{ is odd,}\\
\operatorname*{T}\nolimits_{d-1}\left(  x_{1},\ldots,x_{n/2}\right)
\operatorname*{OR}\operatorname*{T}\nolimits_{d-1}\left(  x_{n/2+1}%
,\ldots,x_{n}\right)  & \text{if }d>0\text{ is even.}%
\end{array}
\right.
\end{align}
It is not hard to see that%
\begin{equation}
\operatorname*{D}\left(  \operatorname*{T}\nolimits_{d}\right)  =\deg\left(
\operatorname*{T}\nolimits_{d}\right)  =2^{d}=n.
\end{equation}
By contrast, Saks and Wigderson \cite{sakswigderson} proved the following.

\begin{theorem}
[Saks-Wigderson \cite{sakswigderson}]\label{swthm}$\operatorname*{R}%
_{0}\left(  \operatorname*{T}_{d}\right)  =O\left(  \left(  \frac{1+\sqrt{33}%
}{4}\right)  ^{d}\right)  =O(n^{0.753\ldots})$.
\end{theorem}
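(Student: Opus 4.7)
The plan is to exhibit an explicit recursive Las Vegas algorithm for $\operatorname*{T}_d$ whose worst-case \emph{expected} query count is $O(\alpha^d)$ with $\alpha=\tfrac{1+\sqrt{33}}{4}$, and then turn this into the worst-case bound demanded by $\operatorname*{R}_0$ via a standard Markov-style truncation.

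The algorithm, applied recursively at each internal node $v$, picks one of $v$'s two children uniformly at random and evaluates it recursively. If its value already determines the gate at $v$---a $0$ at an $\operatorname*{AND}$, or a $1$ at an $\operatorname*{OR}$---the algorithm returns that value; otherwise it recursively evaluates the other child and returns \emph{its} value. Leaves are handled by a single query.

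To analyze it, I would stratify the expected cost by gate type and output value. Write $E_b^{\wedge}(d)$ (resp.\ $E_b^{\vee}(d)$) for the supremum, over inputs whose depth-$d$ subtree has root gate $\operatorname*{AND}$ (resp.\ $\operatorname*{OR}$) and evaluates to $b$, of the expected number of queries the algorithm makes on that subtree. An $\operatorname*{AND}$ evaluating to $1$ forces both children to $1$, and thus costs exactly $2E_1^{\vee}(d-1)$. An $\operatorname*{AND}$ evaluating to $0$ is maximized by inputs with exactly one $0$-child: with probability $1/2$ that child is picked first for a cost of $E_0^{\vee}(d-1)$, and otherwise both children get evaluated for a cost of $E_1^{\vee}(d-1)+E_0^{\vee}(d-1)$, giving
\[
E_0^{\wedge}(d)\;\le\;E_0^{\vee}(d-1)+\tfrac{1}{2}E_1^{\vee}(d-1).
\]
Symmetric identities hold for $\operatorname*{OR}$. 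Composing an $\operatorname*{AND}$ layer with an $\operatorname*{OR}$ layer collapses these into the $2\times 2$ matrix recurrence
\[
\begin{pmatrix} E_0^{\wedge}(d) \\ E_1^{\wedge}(d) \end{pmatrix}\;\le\;\begin{pmatrix} 9/4 & 1/2 \\ 1 & 2 \end{pmatrix}\!\begin{pmatrix} E_0^{\wedge}(d-2) \\ E_1^{\wedge}(d-2) \end{pmatrix},
\]
whose characteristic polynomial $\lambda^{2}-\tfrac{17}{4}\lambda+4$ has largest root $\tfrac{17+\sqrt{33}}{8}=\alpha^{2}$. Hence both coordinates grow like $O(\alpha^d)$, and since $n=2^d$ this is $O(n^{\log_2\alpha})=O(n^{0.7537\ldots})$.

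The final step is to upgrade the expected bound to the worst-case count in the definition of $\operatorname*{R}_0$: run the algorithm for at most $3$ times its worst-case expected cost, outputting ``don't know'' on the (by Markov, at most $1/3$ fraction of) runs that fail to halt. The main subtlety in the write-up is the alternation of $\operatorname*{AND}$ and $\operatorname*{OR}$ layers: no single-variable, one-level recurrence closes, which is what forces us either to track four quantities in parallel or, as above, to fold two consecutive layers into a matrix whose spectral radius is precisely what produces the exponent $\log_2\tfrac{1+\sqrt{33}}{4}$.
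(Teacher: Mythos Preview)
The paper does not give its own proof of this theorem: it is stated with attribution to Saks and Wigderson \cite{sakswigderson} and used as a black box. So there is nothing in the paper to compare your argument against.

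That said, your proposal is essentially the original Saks--Wigderson proof, and it is correct. The randomized recursive evaluation with random child ordering is exactly their algorithm; your four-quantity recurrence (stratified by gate type and output value) and the two-level matrix composition
\[
\begin{pmatrix} 9/4 & 1/2 \\ 1 & 2 \end{pmatrix}
\]
with spectral radius $\tfrac{17+\sqrt{33}}{8}=\alpha^{2}$ reproduce their analysis. The only point worth a remark is the claim that the worst input for an $\operatorname*{AND}$ gate outputting $0$ has exactly one $0$-child: you are implicitly using that $E_1^{\vee}(d-1)\ge 0$, so the one-zero case dominates the two-zero case; and since you take suprema of $E_0^{\vee}$ and $E_1^{\vee}$ separately, the resulting inequality is a valid upper bound even if no single input simultaneously attains both suprema. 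The final Markov truncation matches the paper's own convention for $\operatorname*{R}_0$ (see the discussion after the definition in Section~\ref{QUERY} and Proposition~\ref{r0r0}).
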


Saks and Wigderson \cite{sakswigderson}\ also proved a matching lower bound of
$\operatorname*{R}_{0}\left(  \operatorname*{T}_{d}\right)  =\Omega
(n^{0.753\ldots})$, while Santha \cite{santha}\ proved that $\operatorname*{R}%
_{2}\left(  \operatorname*{T}_{d}\right)  =\Omega(n^{0.753\ldots})$ even for
bounded-error algorithms. \ Note that $\operatorname*{T}_{d}$\ gives the
largest known gap between $\operatorname*{D}\left(  f\right)  $\ and
$\operatorname*{R}_{2}\left(  f\right)  $\ for any total Boolean function $f$.

Recently, building on the breakthrough quantum walk algorithm for game-tree
evaluation \cite{fgg:nand} (see also \cite{acrsz}), Ambainis et
al.\ \cite{aclt} proved the following.

\begin{theorem}
[Ambainis et al.\ \cite{aclt}]\label{acltthm}$\operatorname*{Q}_{0}\left(
\operatorname*{T}_{d}\right)  =O(\sqrt{n}\log^{2}n).$
\end{theorem}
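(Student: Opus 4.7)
The plan is to convert the known bounded-error quantum algorithm for the AND/OR tree into a zero-error algorithm at only a polylogarithmic cost, by a recursive certify-and-verify strategy. First I would take the Farhi--Goldstone--Gutmann quantum walk algorithm (and its refinement in \cite{acrsz}), which evaluates $\operatorname*{T}_d$ in $O(\sqrt{n})$ queries with constant success probability, and amplify it so that, for any subtree of size $m = 2^k$, we obtain a subroutine $\mathcal{A}_k$ that outputs the correct value with error at most $1/n^c$ using $O(\sqrt{m}\log n)$ queries.

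Next I would build a zero-error algorithm as follows. Run $\mathcal{A}_d$ on the root to obtain a tentative bit $b$. Then recursively certify $b$ by walking down the tree: at an $\operatorname*{AND}$ node tentatively equal to $1$, both children are tentatively $1$, so recurse into both; at an $\operatorname*{AND}$ node tentatively equal to $0$, call $\mathcal{A}_{k-1}$ on the children to locate one child witnessing the $0$ and recurse into that child only; the rules at $\operatorname*{OR}$ nodes are dual. At the leaves we query classically. Any inconsistency exposed by the certification (say, a leaf disagreeing with its ancestor's claim) triggers a restart of the whole procedure; since the restart probability can be driven to $1/\operatorname*{poly}(n)$, this loses only a constant factor in expectation. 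To output ``don't know,'' the algorithm simply caps the number of restarts and halts after a polynomial budget; using the expected-queries formulation of $\operatorname*{Q}_0$ (as in Section~\ref{QQUERY}), this suffices.

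The complexity analysis proceeds by summing $\sqrt{m_i}\log n$ over the nodes visited in the certification. At each of the $d = \log n$ recursion levels, the subtree sizes visited satisfy $\sum_i m_i \leq n$, so by Cauchy--Schwarz the level's total cost is $O(\sqrt{n}\log n)$ queries, giving an overall $O(\sqrt{n}\log^2 n)$. The main obstacle is precisely this accounting: if one naively re-runs the tree evaluator at every internal node with full amplification, the overhead explodes to $\Omega(n)$. The key is that the certification never asks a subtree to be re-evaluated after it has been verified, so the work at each level really is bounded by one pass of $\sqrt{\cdot}\log n$-cost evaluations on a \emph{partition} of a subset of the leaves --- exactly the point where the quantum walk structure of \cite{fgg:nand,acrsz} must be invoked to ensure that its complexity behaves additively over disjoint subtrees.
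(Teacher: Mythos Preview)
The paper does not give its own proof of this theorem: Theorem~\ref{acltthm} is simply quoted as a result of Ambainis et al.~\cite{aclt}, so there is nothing in the present paper to compare your sketch against. That said, your certify-and-verify strategy is a reasonable route to a zero-error algorithm, and it is in the spirit of how such results are obtained. Let me flag one genuine gap in your analysis.

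The step ``$\sum_i m_i \leq n$, so by Cauchy--Schwarz the level's total cost is $O(\sqrt{n}\log n)$'' does not go through as written. Cauchy--Schwarz only gives $\sum_i \sqrt{m_i}\le \sqrt{N\sum_i m_i}$, where $N$ is the number of subtrees visited at that level; you have bounded $\sum_i m_i$ but not $N$. If, say, $N=2^k$ subtrees of size $n/2^k$ were visited at depth $k$, the bound would blow up to $\sqrt{n\cdot 2^k}$. What actually saves you is the alternating branch/no-branch structure of AND/OR certificates: after an $\operatorname*{AND}$ node with value $1$ (branch into both), the children are $\operatorname*{OR}$ nodes with value $1$ (no branch), and dually; hence the number of visited nodes only doubles every \emph{two} levels, giving $N=O(2^{k/2})$ at depth $k$. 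Then $\sum_i \sqrt{m_i}=2^{k/2}\sqrt{n/2^k}=\sqrt{n}$ at each level, and your $O(\sqrt{n}\log^2 n)$ total follows. So the conclusion is right, but the justification must use the certificate structure of the AND/OR tree, not merely disjointness of the subtrees. Your closing remark about ``additivity over disjoint subtrees'' does not repair this either: running the evaluator separately on disjoint pieces costs $\sum_i\sqrt{m_i}$, and nothing about the quantum walk makes that collapse to $\sqrt{\sum_i m_i}$ without further argument.
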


By comparison, it is not hard to show (by reduction from \textsc{Parity}) that
$\operatorname*{Q}_{2}\left(  \operatorname*{T}_{d}\right)  =\Omega(\sqrt{n}%
)$. \ Once again, Theorem \ref{acltthm}\ gives the largest known gap between
$\operatorname*{D}\left(  f\right)  $\ and $\operatorname*{Q}_{0}\left(
f\right)  $\ for any total $f$.\footnote{It improves slightly on an earlier
result of Buhrman et al.\ \cite{bcwz}, who showed that for every
$\varepsilon>0$, there exists an $f$ such that $\operatorname*{Q}_{0}\left(
f\right)  =O(\operatorname*{D}\left(  f\right)  ^{1/2+\varepsilon})$. \ For
$\operatorname*{Q}_{2}$, we can do slightly better ($\operatorname*{Q}%
_{2}\left(  f\right)  =O(\sqrt{\operatorname*{D}\left(  f\right)  })$) by just
taking $f$ to be the $\operatorname*{OR}$\ function.}

Finally, the following fact will be useful to us.

\begin{proposition}
\label{tvspar}Let $n=2^{d}$. \ The number of inputs $X\in\left\{  0,1\right\}
^{n}$\ such that $\operatorname*{T}_{d}\left(  X\right)  ={}$\textsc{Par}%
$\left(  X\right)  $\ is exactly $2^{n-1}+1$\ if $d$ is even, and exactly
$2^{n-1}-1$\ if $d$ is odd.
\end{proposition}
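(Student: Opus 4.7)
The approach is to convert the counting problem into a character-sum identity that factorizes along the tree. Set $P(X):=(-1)^{\text{\textsc{Par}}(X)}$, so that $P(X_L,X_R)=P(X_L)\,P(X_R)$ and $\sum_{X\in\{0,1\}^n}P(X)=0$ for every $n\geq 1$. Writing $N(d)$ for the number of inputs on which $\operatorname*{T}\nolimits_d$ agrees with $\text{\textsc{Par}}$, the usual translation from agreements to character sums gives
\begin{equation*}
N(d)=2^{n-1}+\tfrac{1}{2}\sigma_d,\qquad \sigma_d:=\sum_{X\in\{0,1\}^n}(-1)^{\operatorname*{T}\nolimits_d(X)}P(X),
\end{equation*}
so the proposition reduces to showing $\sigma_d=2(-1)^d$.

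Split $\sigma_d=u_d-v_d$, where $u_d:=\sum_{\operatorname*{T}\nolimits_d(X)=0}P(X)$ and $v_d:=\sum_{\operatorname*{T}\nolimits_d(X)=1}P(X)$. Since $u_d+v_d=\sum_X P(X)=0$, we have $\sigma_d=2u_d$, and it suffices to prove $u_d=(-1)^d$ by induction on $d$. The base case $d=0$ is immediate from $\operatorname*{T}\nolimits_0(x)=x$, which gives $u_0=1$.

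For the inductive step, I would write $X=(X_L,X_R)$ and exploit multiplicativity $P(X)=P(X_L)P(X_R)$. For $d$ odd (root AND), $\operatorname*{T}\nolimits_d(X)=1$ iff both subtrees evaluate to $1$, and the sum defining $v_d$ factorizes: $v_d=v_{d-1}^{\,2}=u_{d-1}^{\,2}$, whence $u_d=-v_d=-u_{d-1}^{\,2}=-\bigl((-1)^{d-1}\bigr)^{2}=(-1)^d$. For $d$ even (root OR), $\operatorname*{T}\nolimits_d(X)=0$ iff both subtrees are $0$, giving $u_d=u_{d-1}^{\,2}=\bigl((-1)^{d-1}\bigr)^{2}=1=(-1)^d$. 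Substituting $u_d=(-1)^d$ back yields $N(d)=2^{n-1}+(-1)^d$, as claimed.

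There is no real obstacle here; the argument is a short two-case induction. The only point needing care is to invoke $\sum_X P(X)=0$ (valid at every level since $n\geq 1$) in order to trade $v_d$ for $-u_d$, so that the AND and OR recursions collapse to essentially the same squaring step, with the sign flip in the odd case producing the alternating $(-1)^d$.
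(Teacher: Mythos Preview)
Your proof is correct and is essentially the same as the paper's: both convert the agreement count into the $\pm 1$ correlation $\sigma_d=\sum_X(-1)^{\operatorname*{T}_d(X)}(-1)^{\textsc{Par}(X)}$ (the paper's $C_d$) and evaluate it by an induction that factorizes along the two subtrees. The only cosmetic difference is that you track $u_d=\sigma_d/2$ via the relation $u_d+v_d=0$, whereas the paper derives the recursion $C_d=\pm C_{d-1}^2/2$ directly in the Fourier picture.
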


\begin{proof}
This is most easily proved by switching to the Fourier representation. \ Let%
\begin{align}
\operatorname*{T}\nolimits_{d}^{\ast}\left(  x_{1},\ldots,x_{n}\right)   &
:=1-2\operatorname*{T}\nolimits_{d}\left(  \frac{1-x_{1}}{2},\ldots
,\frac{1-x_{n}}{2}\right)  ,\\
\text{\textsc{Par}}^{\ast}\left(  X\right)   &  :=x_{1}\cdots x_{n}.
\end{align}
Then the problem reduces to computing the correlation%
\begin{equation}
C_{d}:=\sum_{X\in\left\{  0,1\right\}  ^{n}}\operatorname*{T}\nolimits_{d}%
^{\ast}\left(  X\right)  \text{\textsc{Par}}^{\ast}\left(  X\right)  ,
\end{equation}
since%
\begin{equation}
\left\vert \left\{  X\in\left\{  0,1\right\}  ^{n}:\operatorname*{T}%
\nolimits_{d}\left(  X\right)  =\text{\textsc{Par}}\left(  X\right)  \right\}
\right\vert =2^{n-1}+\frac{C_{d}}{2}.
\end{equation}
We claim, by induction on $d$, that $C_{d}=2$\ if $d$\ is even, and $C_{d}%
=-2$\ if $d$ is odd. \ Certainly this holds for the base case $d=0$. \ For
larger $d$, using the fact that every two distinct monomials have correlation
$0$, one can check by calculation that%
\begin{equation}
C_{d}=\left\{
\begin{array}
[c]{cc}%
-C_{d-1}^{2}/2 & \text{if }d\text{ is odd,}\\
C_{d-1}^{2}/2 & \text{if }d\text{ is even.}%
\end{array}
\right.
\end{equation}

\end{proof}

\section{Algorithms for \textsc{Weak Parity}\label{ALG}}

We now prove our first result: that there exist nontrivial randomized and
quantum algorithms for \textsc{Weak Parity}. \ For simplicity, we first
consider the special case $\varepsilon=2^{-n}$; later we will generalize to
arbitrary $\varepsilon$.

\begin{lemma}
\label{specialcase}We have%
\begin{align}
\operatorname*{Q}\nolimits_{2}(\text{\textsc{WeakPar}}_{n,2^{-n}})  &
=O(\sqrt{n}),\\
\operatorname*{R}\nolimits_{0}(\text{\textsc{WeakPar}}_{n,2^{-n}})  &
=O(n^{0.754}),\\
\operatorname*{Q}\nolimits_{0}(\text{\textsc{WeakPar}}_{n,2^{-n}})  &
=O(\sqrt{n}\log^{2}n).
\end{align}

\end{lemma}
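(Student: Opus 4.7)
The backbone of the argument is Proposition~\ref{tvspar}: the complete binary AND/OR tree $\operatorname{T}_d$ on $n = 2^d$ bits agrees with $\text{\textsc{Par}}$ on exactly $2^{n-1}+1$ inputs when $d$ is even (and on $2^{n-1}-1$ inputs when $d$ is odd, in which case I would use $1-\operatorname{T}_d$ instead). So when $n$ is itself a power of two, the algorithm is just ``evaluate $\operatorname{T}_d$ (or its negation) on $X$ and output the result,'' and the three query-complexity bounds come directly from Theorem~\ref{swthm} (Saks--Wigderson) for $\operatorname{R}_0$, Theorem~\ref{acltthm} (Ambainis et al.) for $\operatorname{Q}_0$, and the Grover-based game-tree algorithm of \cite{fgg:nand} for $\operatorname{Q}_2$.

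For general $n$, I would use the binary expansion $n = 2^{d_1} + \cdots + 2^{d_\ell}$ with $d_1 > \cdots > d_\ell \ge 0$: partition $X$ into blocks $X^{(1)}, \ldots, X^{(\ell)}$ of these sizes, run the tree algorithm separately on each block (negating on odd-$d_i$ blocks so that each $g_i$ agrees with block-parity on $2^{2^{d_i}-1}+1$ inputs), and output $g(X) := \bigoplus_i g_i(X^{(i)})$. In the $\pm 1$ representation each block has correlation $2\cdot 2^{-2^{d_i}}$ with its local parity, so by independence the total correlation of $g$ with $\text{\textsc{Par}}_n$ is $\prod_i 2\cdot 2^{-2^{d_i}} = 2^{\ell-n}$, and thus $g$ agrees with $\text{\textsc{Par}}_n$ on $2^{n-1}+2^{\ell-1}\ge 2^{n-1}+1$ inputs as required. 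The query cost splits over blocks, and because the $d_i$ are distinct integers the sequences $(2^{d_i})^{0.754}$, $\sqrt{2^{d_i}}$, and $\sqrt{2^{d_i}}\log^2(2^{d_i})$ are all geometric in $i$, so each sum is dominated by its first term and bounded by $O(n^{0.754})$, $O(\sqrt{n})$, and $O(\sqrt{n}\log^2 n)$ respectively. For $\operatorname{R}_0$ and $\operatorname{Q}_0$ there is nothing more to do, since each tree subroutine is zero-error with bounded expected queries, and Proposition~\ref{r0r0} handles the conversion to the worst-case-query definition.

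The one place I expect trouble is the bounded-error quantum case. Independently XORing $\ell$ algorithms with individual error $p$ yields XOR-error $\tfrac12 - \tfrac12 (1-2p)^\ell$, which for $p = 1/3$ degrades to nearly $1/2$ once $\ell$ exceeds a small constant; forcing the combined error below $1/3$ requires amplifying every block down to error $O(1/\ell) = O(1/\log n)$, which costs an $O(\log\log n)$ factor on top of $\sum_i \sqrt{2^{d_i}} = O(\sqrt{n})$. I would either absorb this $\log\log n$ into the $O(\sqrt{n})$ of the statement, or (cleaner) observe that when $n$ is restricted to be a power of two, $\ell = 1$ and no amplification is needed at all.
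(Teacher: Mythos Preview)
Your treatment of $\operatorname{R}_0$ and $\operatorname{Q}_0$ is correct and in fact more careful than the paper's: the paper simply writes ``assume for simplicity that $n$ has the form $2^d$; this will not affect the asymptotics'' and leaves it at that, whereas your binary-expansion decomposition and correlation computation actually fill in that gap. (Your observation that zero-error subroutines compose cleanly under XOR via the expected-query formulation is exactly right.)

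For $\operatorname{Q}_2$, however, the paper takes a different and much simpler route, and it is worth noting because it dissolves precisely the difficulty you flag at the end. Instead of the AND/OR tree, the paper uses $\operatorname{OR}$: for \emph{every} $n$, $\operatorname{OR}(X)$ agrees with $\text{\textsc{Par}}(X)$ on exactly $2^{n-1}+1$ inputs (all odd-Hamming-weight strings, plus $0^n$), and Grover's algorithm computes $\operatorname{OR}$ with bounded error in $O(\sqrt n)$ queries for every $n$. So there is no block decomposition, no XOR of $\ell$ bounded-error outputs, and hence no amplification penalty. Your proposed fix of ``absorbing'' the $\log\log n$ factor does not work as stated, since $\sqrt n\,\log\log n$ is not $O(\sqrt n)$; switching to $\operatorname{OR}$ is the clean resolution.
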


\begin{proof}
For $\operatorname*{Q}\nolimits_{2}$, observe that the $\operatorname*{OR}%
$\ function, $\operatorname*{OR}\left(  X\right)  $, agrees with the parity of
$X$ on $2^{n-1}+1$\ inputs $X\in\left\{  0,1\right\}  ^{n}$: namely, all the
inputs of odd Hamming weight, plus the input $0^{n}$. \ Thus, simply computing
$\operatorname*{OR}\left(  X\right)  $\ gives us an algorithm for
\textsc{WeakPar}$_{n,\varepsilon}$ with $\varepsilon=2^{-n}$. \ And of course,
$\operatorname*{OR}$\ can be computed with bounded error in $O\left(  \sqrt
{n}\right)  $\ quantum queries, using Grover's algorithm.

For $\operatorname*{R}\nolimits_{0}$, assume for simplicity that $n$ has the
form $2^{d}$; this will not affect the asymptotics. \ By Proposition
\ref{tvspar}, if $d$ is even then the AND/OR tree\ $\operatorname*{T}%
_{d}\left(  X\right)  $\ agrees with \textsc{Par}$\left(  X\right)  $\ on
$2^{n-1}+1$\ inputs $X$, while if $d$ is odd then $1-\operatorname*{T}%
_{d}\left(  X\right)  $\ does. \ Either way, simply computing
$\operatorname*{T}_{d}\left(  X\right)  $\ gives us an algorithm for
\textsc{WeakPar}$_{n,2^{-n}}$. \ Furthermore, by Theorem \ref{swthm}, there is
a zero-error randomized algorithm for $\operatorname*{T}_{d}\left(  X\right)
$\ that makes $O(n^{0.754})$\ queries.

For $\operatorname*{Q}\nolimits_{0}$, we also compute either
$\operatorname*{T}_{d}\left(  X\right)  $\ or $1-\operatorname*{T}_{d}\left(
X\right)  $\ as our guess for \textsc{Par}$\left(  X\right)  $,\ except now we
use the zero-error quantum algorithm of Theorem \ref{acltthm}, which makes
$O(\sqrt{n}\log^{2}n)$\ queries.
\end{proof}

Next, we give a general strategy for converting a \textsc{Weak Parity}%
\ algorithm for small $\varepsilon$ into an algorithm that works for larger
$\varepsilon$, with the query complexity gradually increasing as $\varepsilon$\ does.

\begin{lemma}
\label{epsrelation}For all positive integers $k$, we have%
\begin{equation}
\operatorname*{R}\nolimits_{2}(\text{\textsc{WeakPar}}_{kn,\varepsilon})\leq
k\cdot\operatorname*{R}\nolimits_{2}(\text{\textsc{WeakPar}}_{n,\varepsilon}).
\label{ineq1}%
\end{equation}
So in particular, suppose $\operatorname*{R}\nolimits_{2}($\textsc{WeakPar}%
$_{n,1/f\left(  n\right)  })\leq T\left(  n\right)  $. \ Then for all $N$ and
$\varepsilon>0$,%
\begin{equation}
\operatorname*{R}\nolimits_{2}(\text{\textsc{WeakPar}}_{N,\varepsilon}%
)\leq\frac{N\cdot T\left(  f^{-1}\left(  1/\varepsilon\right)  \right)
}{f^{-1}\left(  1/\varepsilon\right)  }. \label{ineq2}%
\end{equation}
Exactly the same holds if we replace $\operatorname*{R}\nolimits_{2}$\ by
$\operatorname*{R}\nolimits_{0}$,\ $\operatorname*{Q}_{2}$, or
$\operatorname*{Q}_{0}$ throughout.
\end{lemma}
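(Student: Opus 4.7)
The idea is a ``folding'' reduction from $\textsc{WeakPar}_{kn,\varepsilon}$ to $\textsc{WeakPar}_{n,\varepsilon}$. Given $X = (X^{(1)},\ldots,X^{(k)}) \in \{0,1\}^{kn}$ partitioned into $k$ blocks of length $n$, define the virtual input $Y \in \{0,1\}^n$ by $Y_i := X^{(1)}_i \oplus \cdots \oplus X^{(k)}_i$. Two facts make this useful: (a) $\text{\textsc{Par}}(Y) = \text{\textsc{Par}}(X)$, since both equal $\bigoplus_{j,i} X^{(j)}_i$; and (b) when $X$ is uniform on $\{0,1\}^{kn}$, the induced $Y$ is uniform on $\{0,1\}^n$, and indeed every $Y \in \{0,1\}^n$ has exactly $2^{(k-1)n}$ preimages.

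To establish (\ref{ineq1}), I would take an optimal algorithm $A$ for $\textsc{WeakPar}_{n,\varepsilon}$ of cost $S := \operatorname*{R}_2(\textsc{WeakPar}_{n,\varepsilon})$, with good set $A_n \subseteq \{0,1\}^n$ of density $\geq 1/2+\varepsilon$, and define an algorithm $B$ on input $X$ by simulating $A$ on the virtual input $Y(X)$. Each query $A$ makes to a bit $Y_i$ is implemented by $k$ queries to $X^{(1)}_i,\ldots,X^{(k)}_i$ followed by an XOR; in the quantum model this is done coherently by applying the $X$-oracle $k$ times, XOR-ing each returned bit into the response register. Thus $B$ uses at most $kS$ queries. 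For any $X$ with $Y(X) \in A_n$, $B$ outputs $\text{\textsc{Par}}(Y) = \text{\textsc{Par}}(X)$ with probability $\geq 2/3$, and by (b) the set of such $X$ has density exactly $|A_n|/2^n \geq 1/2 + \varepsilon$. The reduction is insensitive to the error model: for $\operatorname*{R}_0$, $\operatorname*{Q}_2$, and $\operatorname*{Q}_0$, any ``don't know'' output of $A$ is echoed by $B$, and the quantum simulation is exact.

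Part (\ref{ineq2}) then follows immediately by setting $n := f^{-1}(1/\varepsilon)$ and $k := N/n$ in (\ref{ineq1}); when $N$ is not a multiple of $n$, one can pad with dummy bits or read off a residual block of fewer than $n$ bits deterministically, which introduces only a lower-order additive loss that is absorbed into the stated asymptotic form.

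I do not anticipate a real obstacle: the only technicality is verifying that, in the quantum case, one oracle call to $Y_i$ is genuinely simulable by $k$ calls to the $X$-oracle without destroying reversibility, which holds because XORing a queried bit into a response register is a unitary operation, so the $k$ calls compose into a correct implementation of the $Y$-oracle. The whole argument is essentially the observation that $\textsc{Parity}$ is invariant under the balanced block-folding map $X \mapsto Y(X)$, and that every uniformly random $Y$-instance arises from equally many $X$-instances.
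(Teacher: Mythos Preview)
Your proposal is correct and essentially identical to the paper's proof. The only cosmetic difference is the block layout: the paper partitions the $kn$-bit input into $n$ contiguous blocks of size $k$ and feeds their parities to $A$, whereas you arrange the input as a $k\times n$ array and XOR down the columns; after relabeling bits these are the same reduction, and the cost and success analyses coincide.
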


\begin{proof}
Let $A$ be a randomized algorithm for \textsc{WeakPar}$_{n,\varepsilon}$, and
let $X$ be an input to \textsc{WeakPar}\ of size $kn$. \ Then our strategy is
to group the bits of $X$ into $n$ blocks $Y_{1},\ldots,Y_{n}$\ of $k$ bits
each, then run $A$ on the input%
\begin{equation}
\text{\textsc{Par}}\left(  Y_{1}\right)  ,\ldots,\text{\textsc{Par}}\left(
Y_{n}\right)  ,
\end{equation}
and output whatever $A$ outputs.\ \ If $A$ made $T\left(  n\right)  $\ queries
originally, then this strategy can be implemented using $k\cdot T\left(
n\right)  $ queries: namely, $k$ queries to the underlying input $X$\ every
time $A$\ queries a bit \textsc{Par}$\left(  Y_{i}\right)  $. \ Furthermore,
let $p\left(  Z\right)  $\ be $A$'s success probability on input $Z\in\left\{
0,1\right\}  ^{n}$. \ Then the strategy succeeds whenever%
\begin{equation}
\left\vert p\left(  \text{\textsc{Par}}\left(  Y_{1}\right)  ,\ldots
,\text{\textsc{Par}}\left(  Y_{n}\right)  \right)  -\left(  \text{\textsc{Par}%
}\left(  Y_{1}\right)  \oplus\cdots\oplus\text{\textsc{Par}}\left(
Y_{n}\right)  \right)  \right\vert \leq\frac{1}{3},
\end{equation}
and by assumption, this occurs for at least a $1/2+\varepsilon$\ fraction of
$Z$'s.

The inequality (\ref{ineq2}) is just a rewriting of (\ref{ineq1}), if we make
the substitutions $\varepsilon:=1/f\left(  n\right)  $\ and $n:=f^{-1}\left(
1/\varepsilon\right)  $ to get%
\begin{equation}
\operatorname*{R}\nolimits_{2}(\text{\textsc{WeakPar}}_{f^{-1}\left(
1/\varepsilon\right)  ,\varepsilon})\leq T\left(  f^{-1}\left(  1/\varepsilon
\right)  \right)  ,
\end{equation}
followed by $k:=N/f^{-1}\left(  1/\varepsilon\right)  $. \ Finally, since we
never used that $A$\ was classical or bounded-error, everything in the proof
still works if we replace $\operatorname*{R}\nolimits_{2}$\ by
$\operatorname*{R}\nolimits_{0}$,\ $\operatorname*{Q}_{2}$, or
$\operatorname*{Q}_{0}$ throughout.
\end{proof}

Combining Lemmas \ref{specialcase}\ and \ref{epsrelation}\ now easily gives us
our upper bounds:

\begin{theorem}
\label{upper}For all $n$ and $\varepsilon\in\left[  2^{-n},1/2\right]  $, we
have%
\begin{align}
\operatorname*{Q}\nolimits_{2}(\text{\textsc{WeakPar}}_{n,\varepsilon})  &
=O\left(  \frac{n}{\sqrt{\log1/\varepsilon}}\right)  ,\\
\operatorname*{R}\nolimits_{0}(\text{\textsc{WeakPar}}_{n,\varepsilon})  &
=O\left(  \frac{n}{\log^{0.246}1/\varepsilon}\right)  ,\\
\operatorname*{Q}\nolimits_{0}(\text{\textsc{WeakPar}}_{n,\varepsilon})  &
=O\left(  n\cdot\frac{\left(  \log\log1/\varepsilon\right)  ^{2}}{\sqrt
{\log1/\varepsilon}}\right)  .
\end{align}

\end{theorem}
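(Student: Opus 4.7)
The plan is to apply Lemma \ref{epsrelation} to each of the three base-case bounds from Lemma \ref{specialcase}, using the choice $f(n) := 2^n$. With this choice, $f^{-1}(1/\varepsilon) = \log_2(1/\varepsilon)$, and the hypothesis $\operatorname{R}_2(\textsc{WeakPar}_{n,1/f(n)}) \leq T(n)$ in Lemma \ref{epsrelation} matches precisely the form of the bounds provided by Lemma \ref{specialcase}, namely bounds on the complexity of $\textsc{WeakPar}_{n,2^{-n}}$. Lemma \ref{epsrelation} then lifts each base-case bound to arbitrary $\varepsilon$, via the block-XOR trick: group the $N$-bit input into $k = N/\log_2(1/\varepsilon)$ blocks of size $m = \log_2(1/\varepsilon)$, compute the parity of each block (each query in the base algorithm costing $k$ queries on the true input), and feed the resulting $m$-bit string to the base algorithm.

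Concretely, substituting each $T(n)$ from Lemma \ref{specialcase} into the bound $N \cdot T(f^{-1}(1/\varepsilon))/f^{-1}(1/\varepsilon)$ of inequality (\ref{ineq2}):
\begin{itemize}
\item[(i)] For $\operatorname{Q}_2$, take $T(n) = O(\sqrt{n})$, giving
$N \cdot O(\sqrt{\log(1/\varepsilon)})/\log(1/\varepsilon) = O(N/\sqrt{\log(1/\varepsilon)})$.
\item[(ii)] For $\operatorname{R}_0$, take $T(n) = O(n^{0.754})$, giving
$N \cdot O((\log(1/\varepsilon))^{0.754})/\log(1/\varepsilon) = O(N/\log^{0.246}(1/\varepsilon))$.
\item[(iii)] For $\operatorname{Q}_0$, take $T(n) = O(\sqrt{n}\log^2 n)$, giving
$N \cdot O(\sqrt{\log(1/\varepsilon)} \cdot (\log\log(1/\varepsilon))^2)/\log(1/\varepsilon) = O(N\,(\log\log(1/\varepsilon))^2 / \sqrt{\log(1/\varepsilon)})$.
\end{itemize}
This is exactly the form asserted by the theorem in the three cases.

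I do not anticipate a real obstacle here; the work has already been done in Lemmas \ref{specialcase} and \ref{epsrelation}. The only fiddly points, which I would dispose of in a single sentence, are (a) the constraint $\varepsilon \in [2^{-n},1/2]$ ensures that $m := \log_2(1/\varepsilon) \leq n$ so the base algorithm is actually invoked on an allowable input size, and (b) $m$ and $k = N/m$ should be taken as integers by rounding $\log_2(1/\varepsilon)$ to the nearest integer and padding the input $X$ with a constant number of dummy bits whose parities can be absorbed into the block parities; these roundings affect only the hidden constants.
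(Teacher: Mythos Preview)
Your proposal is correct and matches the paper's approach exactly: the paper's proof of Theorem \ref{upper} consists of the single sentence ``Combining Lemmas \ref{specialcase} and \ref{epsrelation} now easily gives us our upper bounds,'' and you have simply spelled out that combination. One small slip in your informal description: in the notation of Lemma \ref{epsrelation} there should be $m=\log_2(1/\varepsilon)$ blocks each of size $k=N/m$ (not the reverse), so that each simulated query costs $k$ real queries; your formal substitution into (\ref{ineq2}) is unaffected and yields the stated bounds.
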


We do not know any upper bound on $\operatorname*{R}\nolimits_{2}%
($\textsc{WeakPar}$_{n,\varepsilon})$\ better than our upper bound on
$\operatorname*{R}\nolimits_{0}($\textsc{WeakPar}$_{n,\varepsilon})$.

As a final note, all of our algorithms actually satisfy a stronger property
than the definition of \textsc{Weak Parity}\ requires. \ Namely, the
algorithms all compute a total Boolean function $f\left(  X\right)  $\ that
agrees with \textsc{Par}$\left(  X\right)  $\ on a $1/2+\varepsilon$\ fraction
of inputs. \ This means, for example, that we can obtain a randomized
algorithm that outputs \textsc{Par}$\left(  X\right)  $\ with probability $1$
on a $1/2+\varepsilon$\ fraction of inputs\ $X\in\left\{  0,1\right\}  ^{n}$,
and that halts after $O(n/\log^{0.246}\left(  1/\varepsilon\right)
)$\ queries in expectation on \textit{every} input $X$\ (not just those inputs
for which the algorithm succeeds). \ We can similarly obtain a quantum
algorithm with expected query complexity\ $O\left(  n\cdot\frac{\left(
\log\log1/\varepsilon\right)  ^{2}}{\sqrt{\log1/\varepsilon}}\right)  $\ and
the same success condition.

\section{Lower Bound via Random Self-Reducibility\label{RSR}}

Our next result is a \textit{lower} bound on the bounded-error randomized and
quantum query complexities of \textsc{Weak Parity}. \ The lower bound matches
our upper bounds in its dependence on $n$, though not in its dependence on
$\varepsilon$.

\begin{theorem}
\label{lower}$\operatorname*{Q}_{2}($\textsc{WeakPar}$_{n,\varepsilon}%
)=\Omega\left(  n/\log\left(  1/\varepsilon\right)  \right)  $ for all
$0<\varepsilon<\frac{1}{2}$.
\end{theorem}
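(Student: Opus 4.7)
The plan is to reduce ordinary \textsc{Parity}---which by Beals et al.\ \cite{bbcmw} requires $\lceil n/2\rceil$ quantum queries even for unbounded error---to \textsc{WeakPar}$_{n,\varepsilon}$, with overhead only $O(\log(1/\varepsilon))$. Given a $T$-query bounded-error quantum algorithm $A$ for \textsc{WeakPar}$_{n,\varepsilon}$, I will build in two stages a quantum algorithm $B$ of complexity $kT$ (with $k = O(\log(1/\varepsilon))$) that computes \textsc{Par}$(X)$ with bias strictly greater than $1/2$ on \emph{every} input $X \in \{0,1\}^n$. The polynomial method will then force $2kT \geq n$, giving $T = \Omega(n/\log(1/\varepsilon))$.

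The first stage is to amplify $A$ on its ``good'' inputs. Let $A'$ run $A$ independently $k = C\log(1/\varepsilon)$ times on the same input $X$ and return the majority vote, for a large enough constant $C$. On any good input---one of the $\geq (1/2+\varepsilon)\,2^n$ strings where $A$ outputs \textsc{Par}$(X)$ with probability $\geq 2/3$---a Chernoff bound gives $\Pr[A'(X) = \text{\textsc{Par}}(X)] \geq 1 - \varepsilon$. The acceptance polynomial $p'(X)$ of $A'$ has degree $\leq 2kT$, either by Beals et al.\ \cite{bbcmw} applied to the $kT$-query algorithm $A'$, or by composing the degree-$k$ majority polynomial with the degree-$\leq 2T$ polynomial $p(X)$ of $A$.

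The second stage applies random self-reducibility of \textsc{Par}: let $B$ pick uniform $R \in \{0,1\}^n$, run $A'$ on $X\oplus R$, and output $A'(X\oplus R) \oplus \text{\textsc{Par}}(R)$. Since $X\oplus R$ is uniform in $\{0,1\}^n$ for each fixed $X$, and $\text{\textsc{Par}}(X\oplus R) = \text{\textsc{Par}}(X)\oplus \text{\textsc{Par}}(R)$,
\[
\Pr[B(X) = \text{\textsc{Par}}(X)] \;=\; \mathbb{E}_{Y \in \{0,1\}^n}\!\left[\Pr[A'(Y) = \text{\textsc{Par}}(Y)]\right] \;\geq\; \left(\tfrac{1}{2} + \varepsilon\right)\!\left(1 - \varepsilon\right) \;>\; \tfrac{1}{2},
\]
with the middle inequality combining the \textsc{WeakPar} guarantee with the amplified success rate, and the last using $\varepsilon < 1/2$. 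Separately, expanding $\Pr[B(X)=1]$ over $R$ and using that $\chi(X):=(-1)^{\text{\textsc{Par}}(X)}$ is a group character ($\chi(X\oplus R)=\chi(X)\chi(R)$) collapses the acceptance probability of $B$ to $\tfrac{1}{2} + \chi(X)\cdot c$, where $c := \mathbb{E}_Y[p'(Y)\chi(Y)]$ is the level-$n$ Fourier coefficient of $p'$; hence $\Pr[B(X) = \text{\textsc{Par}}(X)] = \tfrac{1}{2} - c$ is a constant independent of $X$. Comparing, $c < 0$ and so nonzero, forcing $\deg(p') \geq n$ and therefore $2kT \geq n$.

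The main delicacy is the parameter tradeoff in the amplification: the good-case error $\delta$ must satisfy $(\tfrac{1}{2} + \varepsilon)(1-\delta) > \tfrac{1}{2}$, i.e., $\delta < 2\varepsilon/(1+2\varepsilon)$, while $k$, which scales as $\log(1/\delta)$, must remain only logarithmic in $1/\varepsilon$. Setting $\delta = \varepsilon$ accomplishes both at once for all $\varepsilon \in (0,1/2)$. The same reduction gives matching lower bounds for $\operatorname*{R}_2$, $\operatorname*{R}_0$, and $\operatorname*{Q}_0$, since it uses only the polynomial-degree bound on the acceptance probability of a $T$-query algorithm.
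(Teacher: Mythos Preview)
Your proof is correct and follows essentially the same approach as the paper: amplify the \textsc{WeakPar} algorithm by majority vote to drive the error on good inputs down to $\varepsilon$, then apply the random self-reducibility of \textsc{Parity} via a uniform XOR shift to obtain an algorithm with bias strictly greater than $1/2$ on \emph{every} input, and finally invoke the $n/2$ unbounded-error quantum lower bound for \textsc{Parity}. Your additional Fourier computation---showing that $\Pr[B(X)=\text{\textsc{Par}}(X)]$ is exactly the constant $1/2-c$ with $c$ the top Fourier coefficient of $p'$---is a nice self-contained substitute for citing that lower bound directly, but it is precisely the same polynomial-method argument the paper appeals to.
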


\begin{proof}
Let $C$\ be a quantum algorithm for \textsc{WeakPar}$_{n,\varepsilon}$ that
never makes more than $T$ queries. \ Using $C$, we will produce a new quantum
algorithm $C^{\prime}$, which makes $O\left(  T\log\frac{1}{\varepsilon
}\right)  $\ queries, and which guesses \textsc{Par}$\left(  X\right)  $\ on
\textit{every} input $X\in\left\{  0,1\right\}  ^{n}$\ with probability
stricter greater than $1/2$.\ \ But it is well-known that any quantum
algorithm of the latter kind must make at least $n/2$\ queries: in other
words, that \textsc{Parity}\ has unbounded-error quantum query complexity
$n/2$\ (this follows from the polynomial method \cite{bbcmw}). \ Putting the
two facts together, we conclude that%
\begin{equation}
T=\Omega\left(  \frac{n}{\log1/\varepsilon}\right)  .
\end{equation}

To produce $C^{\prime}$, the first step is simply to amplify $C$. \ Thus, let
$C^{\ast}$ be an algorithm that outputs the majority answer among
$d\log1/\varepsilon$\ invocations of $C$. \ Then by a Chernoff bound, provided
the constant $d$ is sufficiently large,%
\begin{equation}
\Pr_{X\in\left\{  0,1\right\}  ^{n}}\left[  \left\vert \Pr\left[  C^{\ast
}\left(  X\right)  \text{ accepts}\right]  -\text{\textsc{Par}}\left(
X\right)  \right\vert \leq\varepsilon\right]  \geq\frac{1}{2}+\varepsilon.
\end{equation}
Next, $C^{\prime}$\ chooses a string $Y\in\left\{  0,1\right\}  ^{n}%
$\ uniformly at random and sets $Z:=X\oplus Y$. \ It then runs $C^{\ast}$\ to
obtain a guess $b$ about \textsc{Par}$\left(  Z\right)  $. \ Finally,
$C^{\prime}$ outputs \textsc{Par}$\left(  Y\right)  \oplus b$\ as its guess
for \textsc{Par}$\left(  X\right)  $.

Clearly $C^{\prime}$\ has the same quantum query complexity as $C^{\ast}$: it
is easy to simulate a query to a bit $z_{i}$ of $Z$, by querying the
corresponding bit $x_{i}$\ of $X$\ and then XORing with $y_{i}$.
\ Furthermore, notice that $Z$\ is uniformly random, regardless of $X$, and
that if $b=\text{\textsc{Par}}\left(  Z\right)  $\ then \textsc{Par}$\left(
Y\right)  \oplus b=$\textsc{Par}$\left(  X\right)  $. \ It follows
that\ $C^{\prime}$\ succeeds with probability at least%
\begin{equation}
\left(  \frac{1}{2}+\varepsilon\right)  \left(  1-\varepsilon\right)
=\frac{1}{2}+\frac{\varepsilon}{2}-\varepsilon^{2}>\frac{1}{2}%
\end{equation}
for every $X$, which is what we wanted to show.
\end{proof}

Of course, Theorem \ref{lower}\ implies that $\operatorname*{Q}_{0}%
($\textsc{WeakPar}$_{n,\varepsilon})$, $\operatorname*{R}_{2}($%
\textsc{WeakPar}$_{n,\varepsilon})$, and $\operatorname*{R}_{0}($%
\textsc{WeakPar}$_{n,\varepsilon})$ are $\Omega\left(  n/\log\left(
1/\varepsilon\right)  \right)  $\ as well. \ It is curious that we do not get
any lower bounds for $\operatorname*{Q}_{0}$,\ $\operatorname*{R}_{2}$, or
$\operatorname*{R}_{0}$\ better than for $\operatorname*{Q}_{2}$.

It is, however, illuminating to see what happens if we run the reduction of
Theorem \ref{lower}, starting from the assumption that $C$\ is a
\textit{zero-error} randomized or quantum algorithm for \textsc{WeakPar}%
$_{n,\varepsilon}$. \ Suppose furthermore that $C$ satisfies the same strong
property that our zero-error algorithms from Section \ref{ALG} satisfied:
namely, the property that $C$ halts after $T$\ queries in expectation on
\textit{every} input $X\in\left\{  0,1\right\}  ^{n}$. \ In that case, one can
skip the amplification step in Theorem \ref{lower}, to produce an algorithm
$C^{\prime}$\ with the following properties:

\begin{enumerate}
\item[(i)] $C$ halts after $T$\ queries in expectation on every input $X$, and

\item[(ii)] $C$ guesses \textsc{Par}$\left(  X\right)  $\ with probability
greater than $1/2$ on every input $X$.
\end{enumerate}

Now, one might think the above would imply $T\geq n/2$ (regardless of
$\varepsilon$), thereby contradicting our upper bounds from Section \ref{ALG}!
\ However, the apparent paradox is resolved once we realize that the lower
bound of Beals et al.\ \cite{bbcmw}---showing that $T\geq n/2$\ queries are
needed to guess \textsc{Par}$\left(  X\right)  $\ with probability greater
than $1/2$\ on every input $X$---says nothing about \textit{expected} query
complexity. \ And indeed, it is trivial to design an algorithm that guesses
\textsc{Par}$\left(  X\right)  $\ with $1/2+\varepsilon$\ probability on every
input $X$, using $2\varepsilon n$\ queries\ in expectation. \ That algorithm
just evaluates \textsc{Par}$\left(  X\right)  $\ (using $n$ queries) with
probability $2\varepsilon$, and otherwise guesses randomly, without examining
$X$ at all!

\section{Lower Bound via Sensitivity\label{SENSLB}}

Theorem \ref{lower} shows that our algorithms from Theorem \ref{upper}\ are
close to optimal when $\varepsilon$\ is reasonably large. \ Unfortunately,
though, Theorem \ref{lower}\ gives nothing when $\varepsilon=2^{-n}$.
\ Equivalently, it does not even rule out a randomized or quantum algorithm
making a \textit{constant} number of queries (!), that correctly decides
\textsc{Parity}\ on a subset of size $2^{n-1}+1$. \ We conjecture that
$n^{\Omega\left(  1\right)  }$\ randomized or quantum queries are needed for
the latter task, but we are unable to prove that conjecture---a state of
affairs that Section \ref{DQSEC} will help to explain. \ In this section, we
at least prove that $\Omega(\log n)$ randomized queries and $\Omega(\sqrt{\log
n})$ quantum queries are needed to solve \textsc{Weak Parity}\ for all
$\varepsilon>0$.

The key is a combinatorial quantity called $\Lambda\left(  n\right)  $, which
was introduced by Chung, F\"{u}redi, Graham, and Seymour \cite{chung}.
\ Abusing notation, we identify the set $\left\{  0,1\right\}  ^{n}$\ with the
Boolean hypercube graph (where two vertices are adjacent if and only if they
have Hamming distance $1$), and also identify any subset $G\subseteq\left\{
0,1\right\}  ^{n}$\ with the induced subgraph of $\left\{  0,1\right\}  ^{n}$
whose vertex set is $G$. Let $\Delta\left(  G\right)  $\ be the maximum degree
of any vertex in $G$. \ Then%
\begin{equation}
\Lambda\left(  n\right)  :=\min_{G\subseteq\left\{  0,1\right\}
^{n}~:~\left\vert G\right\vert =2^{n-1}+1}\Delta\left(  G\right)
\end{equation}
is the minimum of $\Delta\left(  G\right)  $\ over all induced subgraphs $G$
of size $2^{n-1}+1$.

The following proposition relates $\Lambda\left(  n\right)  $ to \textsc{Weak
Parity}.

\begin{proposition}
\label{chungprop}$\operatorname*{R}_{2}($\textsc{WeakPar}$_{n,\varepsilon
})=\Omega(\Lambda\left(  n\right)  )$ and $\operatorname*{Q}_{2}%
($\textsc{WeakPar}$_{n,\varepsilon})=\Omega(\sqrt{\Lambda\left(  n\right)  })$
for all $\varepsilon>0$.
\end{proposition}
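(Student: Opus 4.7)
The plan is to extract a high block-sensitivity witness directly from the combinatorial definition of $\Lambda(n)$, and then apply standard block-sensitivity lower bounds. Let $\mathcal{A}$ be any bounded-error randomized (resp.\ quantum) algorithm for \textsc{WeakPar}$_{n,\varepsilon}$, with acceptance probability $p(X)$, and let $A := \{X \in \{0,1\}^n : |p(X) - \text{\textsc{Par}}(X)| \leq 1/3\}$ be its success set. Since $\varepsilon > 0$, we have $|A| > 2^{n-1}$, and since $|A|$ is an integer, $|A| \geq 2^{n-1}+1$. Viewing $A$ as an induced subgraph of the Boolean hypercube, the definition of $\Lambda(n)$ produces a vertex $X^\ast \in A$ with at least $\Lambda(n)$ neighbors $X^\ast_1,\dots,X^\ast_k \in A$ (with $k \geq \Lambda(n)$). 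Each $X^\ast_i$ differs from $X^\ast$ in a single coordinate $j_i$, so $\text{\textsc{Par}}(X^\ast_i) \neq \text{\textsc{Par}}(X^\ast)$, and since $X^\ast, X^\ast_1,\dots,X^\ast_k$ all belong to $A$, the reverse triangle inequality gives $|p(X^\ast) - p(X^\ast_i)| \geq 1 - 2\cdot(1/3) = 1/3$ for every $i$.

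This is exactly a block-sensitivity witness of size $k$ for the partial function $\text{\textsc{Par}}|_A$ at the input $X^\ast$, realized by the $k$ disjoint singleton blocks $\{j_1\},\dots,\{j_k\}$. Both bounds now follow from the standard passage from block sensitivity to query complexity, applied to this witness. For the randomized case, a $T$-query algorithm must query each coordinate $j_i$ with at least a constant probability (otherwise a coupling with $X^\ast_i$ prevents it from achieving the required $1/3$ advantage at $X^\ast_i$), and disjointness of the singleton blocks then forces $T = \Omega(k) = \Omega(\Lambda(n))$ by linearity of expectation. For the quantum case, Beals et al.~\cite{bbcmw} say $p$ is a real multilinear polynomial of total degree at most $2T$; restricting $p$ to the $k$-dimensional subcube through $X^\ast$ spanned by the coordinates $j_1,\dots,j_k$ and symmetrizing (Minsky--Papert) yields a univariate polynomial $q$ of degree at most $2T$ which is bounded in $[0,1]$ on $\{0,1,\dots,k\}$, satisfies $q(0) \leq 1/3$, and has $q(1) \geq 2/3$; Nisan--Szegedy's Chebyshev/Markov argument then forces $2T = \Omega(\sqrt{k}) = \Omega(\sqrt{\Lambda(n)})$.

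The only thing that requires care is that we are working with the partial function $\text{\textsc{Par}}|_A$ rather than a total function, but the arguments above invoke $\mathcal{A}$'s behavior only on the $k+1$ specific inputs $X^\ast, X^\ast_1,\dots,X^\ast_k$, all of which lie in $A$ where the success guarantee holds; so the standard proofs transfer verbatim. Thus there is no real obstacle: the substantive content is packaged entirely in the definition of $\Lambda(n)$ together with the pigeonhole observation that $|A| \geq 2^{n-1}+1$.
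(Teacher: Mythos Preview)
Your proof is correct and follows essentially the same approach as the paper: both locate a vertex $X^\ast$ in the success set $A$ with at least $\Lambda(n)$ neighbors in $A$, and both observe that the restriction of \textsc{Parity} to $X^\ast$ and these neighbors embeds an unstructured search problem of size $\Lambda(n)$. The paper simply invokes the known $\Omega(N)$ randomized and $\Omega(\sqrt{N})$ quantum lower bounds for search (\cite{bbbv}), whereas you unpack those same bounds explicitly via block sensitivity and Minsky--Papert symmetrization; the underlying argument is identical.
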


\begin{proof}
Let $U$ be an algorithm that decides \textsc{Parity} (with bounded error
probability) on a subset $A\subseteq\left\{  0,1\right\}  ^{n}$. \ Then we
claim that $U$ must make $\Omega(\Delta\left(  A\right)  )$\ randomized or
$\Omega(\sqrt{\Delta\left(  A\right)  })$\ quantum queries, which is
$\Omega(\Lambda\left(  n\right)  )$\ or $\Omega(\sqrt{\Lambda\left(  n\right)
})$\ respectively if $\left\vert A\right\vert >2^{n-1}$. \ To see this, let
$X\in A$\ be a vertex with degree $\Delta\left(  A\right)  $. \ Then
\textsc{Parity}, when restricted to $X$ and its neighbors, already yields a
Grover search instance of size $\Delta\left(  A\right)  $. \ But searching a
list of $N$ elements is well-known to require $\Omega(N)$\ randomized or
$\Omega(\sqrt{N})$\ quantum queries \cite{bbbv}.
\end{proof}

To build intuition, it is easy to find an induced subgraph $G\subseteq\left\{
0,1\right\}  ^{n}$\ such that $\left\vert G\right\vert =2^{n-1}$\ but
$\Delta\left(  G\right)  =0$: consider the set of all points with odd Hamming
weight. \ But adding a single vertex to that $G$ increases its maximum degree
$\Delta\left(  G\right)  $\ all the way to $n$. \ More generally, Chung et
al.\ \cite{chung} were able to prove the following.\footnote{Chung et al.'s
result is very closely related to an earlier result of Simon \cite{simonsf},
which states that if $f:\left\{  0,1\right\}  ^{n}\rightarrow\left\{
0,1\right\}  $\ is a Boolean function depending on all $n$ of its inputs, then
$\operatorname*{s}\left(  f\right)  \geq\frac{1}{2}\log_{2}n-\frac{1}{2}%
\log_{2}\log_{2}n+\frac{1}{2}$, where $\operatorname*{s}\left(  f\right)
$\ is the sensitivity. \ However, neither Chung et al.'s result nor Simon's
seems derivable as an immediate corollary of the other.}

\begin{theorem}
[Chung et al.\ \cite{chung}]\label{chungthm}We have
\begin{equation}
\Lambda\left(  n\right)  \geq\frac{1}{2}\log_{2}n-\frac{1}{2}\log_{2}\log
_{2}n+\frac{1}{2}\,.
\end{equation}

\end{theorem}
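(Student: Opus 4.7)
My plan is to prove the bound by induction on $n$, making essential use of the parity constraint that $|G| = 2^{n-1}+1$ is odd. The first observation is that the indicator function $f_G : \{0,1\}^n \to \{0,1\}$ must depend on every coordinate: if $f_G$ were independent of some $x_i$, then $G$ would partition into matched pairs $\{(v,0),(v,1)\}$ along that direction and $|G|$ would be even. So $G$ is genuinely $n$-dimensional in the same sense as the hypothesis of Simon's theorem alluded to in the footnote, and we can hope to mimic a Simon-style inductive descent.

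For the inductive step I would restrict $G$ along a coordinate $i$: write $G = G_0^i \sqcup G_1^i$ where $G_b^i = \{X \in G : x_i = b\}$. Because the two sizes are integers summing to the odd number $2^{n-1}+1$, exactly one of them (call it the \emph{heavy side} $H_i$, with sign $b_i$) satisfies $|H_i| \ge 2^{n-2}+1$, while the other has size at most $2^{n-2}$. Applying the inductive hypothesis to $H_i$ viewed as an induced subgraph of $Q_{n-1}$, we obtain a vertex $v \in H_i$ with $\deg_{H_i}(v) \ge \Lambda(n-1)$. Lifted back to $Q_n$, the point $(v,b_i)$ then has at least $\Lambda(n-1)$ neighbors in $G$, with a bonus $+1$ whenever its $i$-partner $(v, 1-b_i)$ also lies in $G$.

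The heart of the argument is to exhibit a direction $i$ in which that $+1$ bonus is actually collected at a near-maximum-degree vertex, so the recurrence strictly improves rather than stalling at $\Lambda(n)\ge\Lambda(n-1)$. I would extract this by a global double counting. On the one hand, the number of cross-edges between $G$ and $G^c$ is $n|G| - 2E(G) \ge |G|(n - \Delta(G))$; on the other, the number of internal edges of $G$, summed over directions, is exactly $E(G) \le \tfrac{1}{2}\Delta(G)|G|$. Combining these with the oddness of $|G|$, one can count pairs $(v,i)$ where $v$ attains the heavy-side extremum in direction $i$ \emph{and} is $i$-partnered in $G$, and show that this count is positive whenever $n$ is large enough relative to $\Lambda(n)$. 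The resulting recurrence takes the shape $\Lambda(n) \ge \Lambda(n') + 1$ with $n'$ shrinking from $n$ by a factor of roughly $4^{\Lambda(n)}$; unrolling yields the claimed bound $\Lambda(n) \ge \tfrac{1}{2}\log_2 n - \tfrac{1}{2}\log_2\log_2 n + \tfrac{1}{2}$.

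The main obstacle I anticipate is making this averaging step tight. A priori an adversary could try to arrange $G$ so that in every direction $i$ the extremal vertex of $H_i$ is precisely the one with no $i$-partner in $G$; ruling this out requires showing that such a systematic choice across all $n$ coordinates is incompatible with the global edge count together with the $|G| = 2^{n-1}+1$ constraint. The $-\tfrac{1}{2}\log_2\log_2 n$ correction enters here: at each level of the recursion one does not quite gain a clean $+1$, and the small slack conceded per level is what produces the $\log\log n$ second-order term in the final bound.
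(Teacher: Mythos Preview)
The paper does not give its own proof of this theorem; it is simply quoted from Chung, F\"{u}redi, Graham, and Seymour \cite{chung}. So there is no in-paper argument to compare your proposal against.

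That said, your outline has a quantitative error that is worth flagging. You assert a recurrence of the form $\Lambda(n)\ge\Lambda(n')+1$ with $n'$ smaller than $n$ by a factor of about $4^{\Lambda(n)}$, and then claim that unrolling this yields $\Lambda(n)\ge\tfrac{1}{2}\log_2 n-\tfrac{1}{2}\log_2\log_2 n+\tfrac{1}{2}$. But if at the $j$th step the degree bound is $d-j$ and the dimension shrinks by $4^{d-j}$, then after $d$ steps the dimension has shrunk by $4^{d+(d-1)+\cdots+1}=4^{d(d+1)/2}$; setting this equal to $n$ gives only $d=\Theta(\sqrt{\log n})$, not $\Theta(\log n)$. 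To reach the Chung et al.\ bound $n\le O(d\cdot 4^{d})$ one needs each inductive step to cost only a \emph{constant} (or near-constant) factor in the dimension, not a factor exponential in the current degree. So either your shrinking factor is misstated, or the unrolling does not give what you claim.

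Beyond that, the core of your plan---the averaging step showing that in some direction the heavy-side extremal vertex is $i$-partnered inside $G$---is exactly the place where the real work lies, and you have not actually carried it out. Your sketch of a double-count using $n|G|-2E(G)\ge |G|(n-\Delta(G))$ and $E(G)\le\tfrac{1}{2}\Delta(G)|G|$ does not by itself distinguish the extremal vertices you need from generic ones, and the adversarial configuration you worry about (every heavy-side maximum sitting on an unpartnered vertex) is not ruled out by those inequalities alone. The actual Chung et al.\ argument (and the closely related Simon argument referenced in the paper's footnote) organizes the induction differently to get the constant-factor-per-step recurrence; you would need to revisit that step to recover the correct bound.
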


Combining Theorem \ref{chungthm} with Proposition \ref{chungprop}\ tells us
immediately that%
\begin{align}
\operatorname*{R}\nolimits_{2}(\text{\textsc{WeakPar}}_{n,\varepsilon})  &
=\Omega(\log n),\\
\operatorname*{Q}\nolimits_{2}(\text{\textsc{WeakPar}}_{n,\varepsilon})  &
=\Omega(\sqrt{\log n})
\end{align}
for all $\varepsilon>0$.

Now, the best-known \textit{upper} bound on $\Lambda\left(  n\right)  $, also
proved by Chung et al.\ \cite{chung}, is $\sqrt{n}+1$, and it is conjectured
that this is essentially tight. \ By Proposition \ref{chungprop}, clearly a
proof of that conjecture would imply%
\begin{align}
\operatorname*{R}\nolimits_{2}(\text{\textsc{WeakPar}}_{n,\varepsilon})  &
=\Omega(\sqrt{n}),\\
\operatorname*{Q}\nolimits_{2}(\text{\textsc{WeakPar}}_{n,\varepsilon})  &
=\Omega\left(  n^{1/4}\right)
\end{align}
for all $\varepsilon>0$---and more generally, proving $\Lambda\left(
n\right)  \geq n^{\Omega\left(  1\right)  }$\ would imply that
$\operatorname*{R}($\textsc{WeakPar}$_{n,\varepsilon})$\ and
$\operatorname*{Q}($\textsc{WeakPar}$_{n,\varepsilon})$ are $n^{\Omega\left(
1\right)  }$.

Unfortunately, proving $\Lambda\left(  n\right)  \geq n^{\Omega\left(
1\right)  }$ will be challenging. \ To see why, recall the famous
\textit{Sensitivity Conjecture} (Conjecture \ref{sensconj}), which says that
$\operatorname*{s}\left(  f\right)  $\ is polynomially related to
$\operatorname*{bs}\left(  f\right)  $\ (or equivalently, to $\deg\left(
f\right)  $). \ In 1992, Gotsman and Linial \cite{gotsmanlinial} showed that
the Sensitivity Conjecture is equivalent to a statement about the maximum
degrees of induced subgraphs of $\left\{  0,1\right\}  ^{n}$:

\begin{theorem}
[Gotsman-Linial \cite{gotsmanlinial}]\label{glthm}Given any growth rate $h$,
we have $\operatorname*{s}\left(  f\right)  >h\left(  \deg\left(  f\right)
\right)  $ for all Boolean functions $f:\left\{  0,1\right\}  ^{n}%
\rightarrow\left\{  0,1\right\}  $, if and only if%
\begin{equation}
\max\left\{  \Delta\left(  G\right)  ,\Delta\left(  \left\{  0,1\right\}
^{n}\setminus G\right)  \right\}  \geq h\left(  n\right)
\end{equation}
for all subsets\ $G\subseteq\left\{  0,1\right\}  ^{n}$\ such that $\left\vert
G\right\vert \neq2^{n-1}$.
\end{theorem}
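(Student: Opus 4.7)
I would organize the argument around the parity-twisted indicator $g_G := f_G \oplus \text{\textsc{Par}}$, where $f_G$ is the indicator of $G \subseteq \{0,1\}^n$. The proof rests on two identities, and each direction of the equivalence then follows cleanly. The first identity translates the purely graph-theoretic quantity $\max\{\Delta(G),\Delta(G^c)\}$ into $\operatorname*{s}(g_G)$, and the second identifies the condition ``$|G|\neq 2^{n-1}$'' with ``$\deg(g_G)=n$.''

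\emph{Identity 1} (max-degree equals max-sensitivity of $g_G$): a bit-flip at $X$ either keeps $f_G$ unchanged (contributing $1$ to $X$'s induced degree in whichever of $G, G^c$ contains it) or changes $f_G$ (contributing $1$ to $\operatorname*{s}^X(f_G)$), so $\deg_G(X)=n-\operatorname*{s}^X(f_G)$ for $X\in G$ and $\deg_{G^c}(X)=n-\operatorname*{s}^X(f_G)$ for $X\in G^c$. Thus $\max\{\Delta(G),\Delta(G^c)\}=n-\min_X\operatorname*{s}^X(f_G)$. Separately, since a single bit-flip flips $\text{\textsc{Par}}$, one checks $\operatorname*{s}^X(g_G)=n-\operatorname*{s}^X(f_G)$, and maximizing over $X$ gives
\[
\max\{\Delta(G),\Delta(G^c)\} \;=\; \operatorname*{s}(g_G).
\]
\emph{Identity 2} (full degree equals imbalance): by Proposition \ref{degreeprop} applied to the whole cube, $\deg(g_G)=n$ iff $g_G$ is parity-correlated on $\{0,1\}^n$, i.e., $|g_G^{-1}(1)\cap E|\neq|g_G^{-1}(1)\cap O|$, where $E,O$ are the even- and odd-weight halves. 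Since $\text{\textsc{Par}}$ is $0$ on $E$ and $1$ on $O$, we get $g_G^{-1}(1)\cap E=G\cap E$ and $g_G^{-1}(1)\cap O=O\setminus G$, so the condition simplifies to $|G\cap E|\neq 2^{n-1}-|G\cap O|$, i.e., $|G|\neq 2^{n-1}$.

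\emph{Deriving the equivalence.} For ``$\Leftarrow$'': given $G$ with $|G|\neq 2^{n-1}$, Identity~2 gives $\deg(g_G)=n$, the hypothesis gives $\operatorname*{s}(g_G)>h(n)$, and Identity~1 converts this to $\max\{\Delta(G),\Delta(G^c)\}>h(n)$. For ``$\Rightarrow$'': given $f$ with $\deg(f)=d$, use Proposition \ref{degreeprop} to extract a $d$-dimensional subcube $S$ on which $f$ is parity-correlated, identify $S$ with $\{0,1\}^d$, and let $\widetilde f := f|_S$, so $\deg(\widetilde f)=d$. Set $G:=(\widetilde f\oplus\text{\textsc{Par}})^{-1}(1)\subseteq\{0,1\}^d$; then $g_G=\widetilde f$ (two XORs with parity cancel), so $\deg(g_G)=d$ and Identity~2 yields $|G|\neq 2^{d-1}$. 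The graph hypothesis then gives $\operatorname*{s}(\widetilde f)=\max\{\Delta(G),\Delta(G^c)\}\geq h(d)$, and since sensitivity is monotone under subcube restriction (every direction in $S$ is also a direction in the ambient cube), $\operatorname*{s}(f)\geq\operatorname*{s}(\widetilde f)\geq h(d)=h(\deg(f))$.

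\emph{Main obstacle.} The bookkeeping is routine; the genuinely clever move is the parity-XOR twist. Without it, the natural identity $\max\{\Delta(G),\Delta(G^c)\}=n-\min_X\operatorname*{s}^X(f_G)$ controls the \emph{minimum} sensitivity of $f_G$, which is not the quantity governed by the sensitivity conjecture. Twisting $f_G$ by parity simultaneously swaps min with max sensitivity and swaps the Fourier coefficient at $\emptyset$ with that at $[n]$, so a single construction links max-degree to max-sensitivity and ``$|G|\neq 2^{n-1}$'' to ``full real degree'' in one stroke. The only remaining nuisance is the strict vs.\ non-strict inequality ($>$ vs.\ $\geq$), which can be absorbed by a trivial shift or by reading $h$ as a step function.
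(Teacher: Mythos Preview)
The paper does not contain a proof of this theorem; it is stated as a citation to Gotsman and Linial \cite{gotsmanlinial} and used as a black box. So there is no ``paper's own proof'' to compare against.

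That said, your argument is correct and is in fact the original Gotsman--Linial proof: the parity-XOR twist $g_G=f_G\oplus\text{\textsc{Par}}$ is exactly their device, and your two identities are the heart of their note. Identity~1 is right (each neighbor of $X$ either stays in the same part of $\{G,G^c\}$, contributing to induced degree, or crosses, contributing to $\operatorname*{s}^X(f_G)$; XORing with parity swaps the roles), and Identity~2 is a direct application of Proposition~\ref{degreeprop} to the full cube, since XORing with \textsc{Par} swaps the top and bottom Fourier coefficients. The restriction step in the ``$\Rightarrow$'' direction is also fine: sensitivity can only drop when passing to a subcube, so $\operatorname*{s}(f)\geq\operatorname*{s}(\widetilde f)$.

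The only blemish you already flagged yourself: the theorem as stated has a strict inequality on one side and a non-strict one on the other, and your argument in the ``graph $\Rightarrow$ sensitivity'' direction naturally yields $\operatorname*{s}(f)\geq h(\deg(f))$ rather than strict. This is a known cosmetic asymmetry in how the Gotsman--Linial equivalence is usually stated (the content is that the two growth rates coincide), and it is correctly handled by your remark about absorbing the discrepancy into $h$.
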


Notice that if $\left\vert G\right\vert \neq2^{n-1}$, then%
\begin{equation}
\max\left\{  \Delta\left(  G\right)  ,\Delta\left(  \left\{  0,1\right\}
^{n}\setminus G\right)  \right\}  \geq\Lambda\left(  n\right)  .
\end{equation}
To see this, choose whichever of $G$\ or $\left\{  0,1\right\}  ^{n}\setminus
G$\ is larger, and then discard all but $2^{n-1}+1$\ of its elements. \ Thus,
any lower bound on Chung et al.'s combinatorial quantity\ $\Lambda\left(
n\right)  $ implies the same lower bound on the function $h\left(  n\right)
$\ of Theorem \ref{glthm}. \ For example, if $\Lambda\left(  n\right)  \geq
n^{\Omega\left(  1\right)  }$, then $\operatorname*{s}\left(  f\right)
\geq\deg\left(  f\right)  ^{\Omega\left(  1\right)  }$.

But this means that \textit{any proof of }$\Lambda\left(  n\right)  \geq
n^{\Omega\left(  1\right)  }$\textit{\ would imply the Sensitivity
Conjecture!}\footnote{Interestingly, we do not know the reverse implication.}
\ Thus, the conjecture $\Lambda\left(  n\right)  \geq n^{\Omega\left(
1\right)  }$\ could be seen as a \textquotedblleft common combinatorial
core\textquotedblright\ of the \textsc{Weak Parity}\ and sensitivity versus
block sensitivity questions.

As a final note, Andy Drucker (personal communication) found a self-contained
proof for $\operatorname*{R}\nolimits_{2}($\textsc{WeakPar}$_{n,\varepsilon
})=\Omega(\log n)$, one not relying on $\Lambda\left(  n\right)  $, which we
include with Drucker's kind permission.

\begin{proposition}
[Drucker]\label{druckerprop}$\operatorname*{R}\nolimits_{2}($\textsc{WeakPar}%
$_{n,\varepsilon})=\Omega(\log n)$ for all $\varepsilon>0$.
\end{proposition}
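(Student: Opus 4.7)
The plan is to sandwich $T$ between a local rigidity constraint on the algorithm's success set and a self-contained combinatorial lower bound on the maximum degree of large induced subgraphs of the Boolean hypercube, giving the desired bound independently of $\varepsilon$. Let $R$ be any randomized algorithm solving $\text{\textsc{WeakPar}}_{n,\varepsilon}$ with worst-case query complexity $T$, and let $A=\{X:\Pr[R(X)=\text{\textsc{Par}}(X)]\geq 2/3\}$; by hypothesis, $|A|\geq(1/2+\varepsilon)2^n\geq 2^{n-1}+1$.

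First I would establish the local rigidity. For every $X\in A$ and every coordinate $i$ with $X^{\{i\}}\in A$, the parities of $X$ and $X^{\{i\}}$ disagree, so $R$'s output distributions on the two inputs have total variation distance at least $2\cdot(2/3)-1=1/3$. These distributions can only differ along executions in which bit $i$ is actually queried, so the probability $q_i(X)$ that $R$ queries bit $i$ on input $X$ is at least $1/3$. Combined with $\sum_i q_i(X)\leq T$, this gives $|\{i:X^{\{i\}}\in A\}|\leq 3T$ for every $X\in A$; equivalently, the induced subgraph of the Boolean hypercube on the vertex set $A$ has maximum degree $\Delta(A)\leq 3T$. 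This is essentially the argument already used in the proof of Proposition~\ref{chungprop}.

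The remaining task, and the main obstacle, is to establish---without going through $\Lambda(n)$ or citing Chung et al.---that any $G\subseteq\{0,1\}^n$ with $|G|\geq 2^{n-1}+1$ satisfies $\Delta(G)\geq c\log n$ for some absolute constant $c>0$. My plan is induction on $n$. Splitting $G$ along a coordinate gives two slices $G_0,G_1$, and because $|G|$ is odd at least one of them has size strictly greater than $2^{n-2}$; applying the inductive hypothesis to that slice $G_b$ yields a vertex $v\in G_b$ with $\deg_{G_b}(v)\geq c\log(n-1)$. To push the bound from $c\log(n-1)$ up to $c\log n$, I would look for such a $v$ whose partner $v^{\{i\}}$ in the other slice is also in $G$---granting a $+1$ contribution from the vertical edge---and the delicate part is arranging this by averaging over the $n$ candidate splitting coordinates, exploiting the fact that once $|G|>2^{n-1}$ the projection $\pi_i(G)$ has size at most $|G|-1$, so that $G$ must contain at least one vertical edge in every direction. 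The corner cases, in which one slice nearly fills its half of the cube while the other is very small, can be handled separately, because then the near-complete slice itself already contains a vertex of very high degree.

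Combining the two ingredients gives $c\log n\leq\Delta(A)\leq 3T$, and hence $T=\Omega(\log n)$, uniformly in $\varepsilon$. The argument is insensitive to the strength of the success promise because Step~1 used only the $2/3$ bounded-error guarantee on $A$ together with a worst-case query bound.
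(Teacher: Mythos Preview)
Your proposal is not Drucker's argument at all; it is the $\Lambda(n)$-based route that the paper already presents earlier in Section~\ref{SENSLB}. Your Step~1 is exactly Proposition~\ref{chungprop} (as you note), and your Step~2 is an attempt to reprove Theorem~\ref{chungthm}, i.e.\ Chung et al.'s bound $\Lambda(n)=\Omega(\log n)$. The paper introduces Proposition~\ref{druckerprop} precisely as an alternative that \emph{avoids} $\Lambda(n)$. Drucker's actual proof is much shorter and uses a global averaging over variables rather than any induced-subgraph combinatorics: a depth-$T$ decision tree contains at most $2^T$ distinct variables, so $\sum_i \Pr_{T\sim\mathcal{D}}[i\in S_T]=\operatorname{E}[|S_T|]\leq 2^T$, whence some single coordinate $i$ satisfies $\Pr[i\in S_T]\leq 2^T/n$. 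That coordinate is globally invisible: for \emph{every} $X$, the acceptance probabilities at $X$ and $X^{\{i\}}$ differ by at most $2^T/n$. If $T<0.9\log_2 n$ this gap is below $1/3$, so at most one member of each pair $\{X,X^{\{i\}}\}$ can lie in $A$, forcing $|A|\leq 2^{n-1}$.

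Your Step~2 sketch also has a real gap. The induction hands you one high-degree vertex $v$ in a large slice $G_b$, but gives you no control over whether its partner $v^{\{i\}}$ lies in $G$. The suggested fix of ``averaging over the $n$ splitting coordinates'' does not work as stated: different coordinates produce different high-degree vertices, and nothing ties any of their partners to $G$. (The observation that $|G|>2^{n-1}$ forces at least one vertical edge in every direction is true but far too weak---you get one edge, not one attached to the high-degree vertex you need.) Chung et al.'s actual proof is more delicate than this naive induction; if you fill in that subtlety you will have simply rederived Theorem~\ref{chungthm} and combined it with Proposition~\ref{chungprop}, which is the paper's first proof of this bound rather than the self-contained one the proposition is advertising. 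Also, a small slip: $|G|\geq 2^{n-1}+1$ does not make $|G|$ odd; you only need $|G_0|+|G_1|>2^{n-1}$ to force one slice above $2^{n-2}$.
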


\begin{proof}
Suppose by contradiction that (say) $\operatorname*{R}\nolimits_{2}%
($\textsc{WeakPar}$_{n,\varepsilon})\leq0.9\log_{2}n$, and let $U$ be a
randomized algorithm that achieves the bound. \ Then we can think of $U$ as
just a probability distribution $\mathcal{D}$\ over decision trees $T$.
\ Given a decision tree $T$, let $S_{T}\subseteq\left[  n\right]  $\ be the
set of all indices $i$ such that the variable $x_{i}$\ appears anywhere in
$T$. \ Then by assumption, each $T$ in the support of $\mathcal{D}$ has depth
at most $0.9\log_{2}n$, and therefore satisfies $\left\vert S_{T}\right\vert
\leq2^{0.9\log_{2}n}=n^{0.9}$. \ By averaging, it follows that there exists an
$i\in\left[  n\right]  $\ such that%
\begin{equation}
\Pr_{T\sim\mathcal{D}}\left[  i\in S_{T}\right]  \leq\frac{1}{n^{0.1}}.
\end{equation}
But this means that, for every $X\in\left\{  0,1\right\}  ^{n}$, we must have%
\[
\left\vert \Pr\left[  U\text{ accepts }X\right]  -\Pr\left[  U\text{ accepts
}X^{\left\{  i\right\}  }\right]  \right\vert \leq\frac{1}{n^{0.1}},
\]
where $X^{\left\{  i\right\}  }$\ denotes $X$ with the $i^{th}$\ bit flipped
(as in Section \ref{SBS}). \ Hence either $\Pr\left[  U\text{ accepts
}X\right]  $\ fails to approximate \textsc{Par}$\left(  X\right)  $, or else
$\Pr\left[  U\text{ accepts }X^{\left\{  i\right\}  }\right]  $\ fails to
approximate \textsc{Par}$(X^{\left\{  i\right\}  })$. \ But this means that
$U$ weakly computes \textsc{Parity}\ on at most $2^{n-1}$\ inputs.
\end{proof}

Interestingly, unlike with our argument based on $\Lambda\left(  n\right)  $,
we do not know how to generalize Drucker's argument to prove any lower bound
on \textit{quantum} query complexity, nor do we know (even conjecturally) how
the argument might be pushed beyond $\Omega(\log n)$.

\section{\label{DQSEC}Connection to $\deg\left(  f\right)  $\ vs.
$\operatorname*{Q}\left(  f\right)  $}

In the last section, we identified a known combinatorial conjecture
($\Lambda\left(  n\right)  \geq n^{\Omega\left(  1\right)  }$) that would
imply that the randomized and quantum query complexities of \textsc{Weak
Parity}\ are $n^{\Omega\left(  1\right)  }$\ for all $\varepsilon>0$.
\ However, since $\Lambda\left(  n\right)  \geq n^{\Omega\left(  1\right)  }$
would also imply the Sensitivity Conjecture, it will clearly be difficult to prove.

So could there be a \textit{different} way to prove tight lower bounds for
$\operatorname*{R}\nolimits_{2}($\textsc{WeakPar}$_{n,\varepsilon})$\ and
$\operatorname*{Q}\nolimits_{2}($\textsc{WeakPar}$_{n,\varepsilon})$---a way
that wouldn't require us to address any longstanding open problems about
Boolean functions? \ Alas, in this section we largely close off that
possibility. \ In particular, suppose we could prove a strong lower bound on
$\operatorname*{R}\nolimits_{2}($\textsc{WeakPar}$_{n,\varepsilon})$. \ We
will show that this would imply a better polynomial relationship between
$\deg\left(  f\right)  $\ and $\operatorname*{R}\nolimits_{2}\left(  f\right)
$ for \textit{all} total Boolean functions $f$ than is currently known.
\ Similar statements hold for $\operatorname*{R}\nolimits_{0}$,
$\operatorname*{Q}\nolimits_{2}$, and $\operatorname*{Q}\nolimits_{0}$.

\begin{theorem}
\label{ambthm}Given a constant $c$, suppose there exists a sequence of
functions $\left\{  f_{n}\right\}  _{n\geq1}$\ such that $\deg\left(
f_{n}\right)  =n$\ and $\operatorname*{R}\nolimits_{2}\left(  f_{n}\right)
=O\left(  n^{c}\right)  $. \ Then%
\begin{equation}
\operatorname*{R}\nolimits_{2}(\text{\textsc{WeakPar}}_{n,\varepsilon
})=O\left(  \frac{n}{\log^{1-c}1/\varepsilon}\right)  .
\end{equation}
The same holds if we replace $\operatorname*{R}\nolimits_{2}$\ by
$\operatorname*{R}\nolimits_{0}$, $\operatorname*{Q}\nolimits_{2}$, or
$\operatorname*{Q}\nolimits_{0}$ in both instances.
\end{theorem}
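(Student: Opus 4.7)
The plan is to apply the hypothesized function $f_n$ directly to solve $\textsc{WeakPar}$ in the extreme regime $\varepsilon = 2^{-n}$, and then invoke Lemma \ref{epsrelation} to amplify $\varepsilon$ to an arbitrary value.

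\emph{Stage one.} Since $\deg(f_n) = n$, Proposition \ref{degreeprop} tells us that $f_n$ is parity-correlated on some $n$-dimensional subcube of $\{0,1\}^n$; as the whole cube is the only $n$-dimensional subcube, $f_n$ is parity-correlated on $\{0,1\}^n$ itself. Writing $a := |\{X : |X|\text{ even}, f_n(X)=1\}|$ and $b := |\{X : |X|\text{ odd}, f_n(X)=1\}|$, a short count (parallel to the one in Proposition \ref{tvspar}) gives that $f_n$ agrees with $\textsc{Par}$ on $2^{n-1}-a+b$ inputs while $1-f_n$ agrees with $\textsc{Par}$ on $2^{n-1}+a-b$ inputs, so one of them agrees on at least $2^{n-1}+1$ inputs. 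Which one can be decided offline from the truth table of $f_n$, requiring no queries. Running the hypothesized $O(n^c)$-query bounded-error algorithm for $f_n$ and optionally negating the output therefore gives $\operatorname{R}_2(\textsc{WeakPar}_{n, 2^{-n}}) = O(n^c)$.

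\emph{Stage two.} Apply Lemma \ref{epsrelation} with $f(m) := 2^m$ and $T(m) := m^c$, so that the stage-one bound reads $\operatorname{R}_2(\textsc{WeakPar}_{m, 1/f(m)}) \leq T(m)$. Since $f^{-1}(1/\varepsilon) = \log_2(1/\varepsilon)$, the lemma's inequality (\ref{ineq2}) becomes
$$\operatorname{R}_2(\textsc{WeakPar}_{N,\varepsilon}) \leq \frac{N \cdot (\log_2(1/\varepsilon))^c}{\log_2(1/\varepsilon)} = O\!\left(\frac{N}{\log^{1-c}(1/\varepsilon)}\right),$$
which is the claimed bound (with $N$ in place of the theorem's $n$). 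The same proof goes through verbatim for $\operatorname{R}_0$, $\operatorname{Q}_2$, and $\operatorname{Q}_0$ in place of $\operatorname{R}_2$, because Proposition \ref{degreeprop} is a statement about $f_n$ as a combinatorial object with no reference to any complexity model, and Lemma \ref{epsrelation} is stated for all four measures.

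There is no genuinely hard step here: the whole theorem is an assembly of Proposition \ref{degreeprop} and Lemma \ref{epsrelation}. The only subtle point worth highlighting is the role of \emph{maximality} in the hypothesis $\deg(f_n) = n$: this is precisely what forces the parity-correlated subcube to be the full cube $\{0,1\}^n$, so that the input size of $f_n$ coincides with the block size fed to Lemma \ref{epsrelation} and no ``wasted'' coordinates are introduced. If one only had $\deg(f_n) = \Theta(n)$, one would have to restrict $f_n$ to its (possibly smaller) parity-correlated subcube and track the resulting loss, which would complicate the bookkeeping but not alter the overall strategy.
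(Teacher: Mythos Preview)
Your proof is correct and follows essentially the same approach as the paper: use Proposition \ref{degreeprop} to show that $f_n$ or $1-f_n$ agrees with \textsc{Par} on at least $2^{n-1}+1$ inputs, giving the $\varepsilon=2^{-n}$ case, then invoke Lemma \ref{epsrelation}. The only point the paper makes explicit that you leave implicit is the reduction to the case where $f_n$ has \emph{exactly} $n$ input variables (by restricting to the variables of a degree-$n$ monomial), but your closing paragraph shows you are aware of this issue.
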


\begin{proof}
We first show that $\operatorname*{R}\nolimits_{2}($\textsc{WeakPar}%
$_{n,2^{-n}})=O\left(  n^{c}\right)  $ in the special case $\varepsilon
=2^{-n}$; then we generalize to larger $\varepsilon$.

Observe that we can assume without loss of generality that each $f_{n}$\ has
exactly\ $n$ inputs. \ For otherwise, let $p$ be the unique multilinear
polynomial representing $f_{n}$; then choose a monomial $m$\ of $p$ with
degree $n$, and arbitrarily fix all bits that do not appear in $m$. \ This
yields a subfunction $f_{n}^{\prime}$\ with $n$ inputs, $\deg\left(
f_{n}^{\prime}\right)  =\deg\left(  f_{n}\right)  =n$, and $\operatorname*{R}%
\nolimits_{2}\left(  f_{n}^{\prime}\right)  \leq\operatorname*{R}%
\nolimits_{2}\left(  f_{n}\right)  $.

Now by Proposition \ref{degreeprop}, the statement $\deg\left(  f_{n}\right)
=n$\ is equivalent to the combinatorial statement%
\begin{equation}
\left\vert \left\{  X:f_{n}\left(  X\right)  =1\text{ and \textsc{Par}}\left(
X\right)  =0\right\}  \right\vert \neq\left\vert \left\{  X:f_{n}\left(
X\right)  =1\text{ and \textsc{Par}}\left(  X\right)  =1\right\}  \right\vert
.
\end{equation}
This means that $f_{n}\left(  X\right)  $\ either agrees or disagrees with
\textsc{Par}$\left(  X\right)  $\ on at least $2^{n-1}+1$\ inputs $X$. \ By
replacing $f_{n}$\ by $1-f_{n}$, we can assume without loss of generality that
the first case holds. \ Then if we run the algorithm for $f_{n}$, it will make
$O\left(  n^{c}\right)  $\ queries and correctly decide \textsc{Parity}\ on at
least $2^{n-1}+1$\ inputs, which was the desired result.

To generalize to arbitrary $\varepsilon$, we simply need to appeal to Lemma
\ref{epsrelation}, which tells us that if%
\begin{equation}
\operatorname*{R}\nolimits_{2}(\text{\textsc{WeakPar}}_{n,2^{-n}})\leq
T\left(  n\right)  =O\left(  n^{c}\right)  ,
\end{equation}
then%
\begin{equation}
\operatorname*{R}\nolimits_{2}(\text{\textsc{WeakPar}}_{N,\varepsilon}%
)\leq\frac{N\cdot T\left(  \log_{2}1/\varepsilon\right)  }{\log_{2}%
1/\varepsilon}=O\left(  \frac{N}{\log^{1-c}1/\varepsilon}\right)  .
\end{equation}
Finally, since we never used that the algorithm was classical or
bounded-error, everything in the proof still works if we replace
$\operatorname*{R}\nolimits_{2}$\ by $\operatorname*{R}\nolimits_{0}$,
$\operatorname*{Q}\nolimits_{2}$, or $\operatorname*{Q}\nolimits_{0}$\ throughout.
\end{proof}

For clarity, let us state Theorem \ref{ambthm}\ in contrapositive form.

\begin{corollary}
\label{ambcor}Suppose $\operatorname*{R}\nolimits_{2}($\textsc{WeakPar}%
$_{n,\varepsilon})=\Omega\left(  n/\log^{1-c}\left(  1/\varepsilon\right)
\right)  $. \ Then for every Boolean function $f$, we have $\operatorname*{R}%
\nolimits_{2}\left(  f\right)  =\Omega\left(  \deg\left(  f\right)
^{c}\right)  $\ (and similarly for $\operatorname*{R}\nolimits_{0}$,
$\operatorname*{Q}\nolimits_{2}$, and $\operatorname*{Q}\nolimits_{0}$).
\end{corollary}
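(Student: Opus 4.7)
The plan is to derive the corollary as the contrapositive of Theorem \ref{ambthm}. Suppose, toward a contradiction, that its conclusion fails, so there is a sequence of Boolean functions $\{g_m\}$ with $\deg(g_m)\to\infty$ and $R_2(g_m)=o(\deg(g_m)^c)$; such a sequence must exist because the negation of a uniform bound $R_2(f)\geq K\deg(f)^c$ (valid for some fixed $K>0$ and all $f$) forces $R_2(g_m)/\deg(g_m)^c\to 0$, which in turn forces $\deg(g_m)\to\infty$ since $R_2\geq 1$ on every non-constant function.

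Next, I would reproduce the subfunction trick used in the proof of Theorem \ref{ambthm}: for each $m$, pick a top-degree monomial in the multilinear representation of $g_m$ and fix every variable outside that monomial, producing a function $f_{n_m}$ on exactly $n_m:=\deg(g_m)$ bits with $\deg(f_{n_m})=n_m$ and $R_2(f_{n_m})\leq R_2(g_m)=o(n_m^c)$. By Proposition \ref{degreeprop}, either $f_{n_m}$ or $1-f_{n_m}$ agrees with \textsc{Par} on at least $2^{n_m-1}+1$ inputs, so its $o(n_m^c)$-query randomized algorithm already solves \textsc{WeakPar}$_{n_m,2^{-n_m}}$.

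Finally, I would feed this into Lemma \ref{epsrelation} with $n=n_m=\log_2(1/\varepsilon)$, which lifts the bound to $R_2(\textsc{WeakPar}_{N,\varepsilon})=o(N/\log^{1-c}(1/\varepsilon))$ along the infinite set of $\varepsilon\in\{2^{-n_m}\}$. Since any valid $\Omega(\cdot)$ lower bound hides one positive constant that must work uniformly, this contradicts the assumed lower bound, establishing the corollary. The same chain of reductions applies verbatim with $R_2$ replaced by $R_0$, $Q_2$, or $Q_0$, because Theorem \ref{ambthm} and Lemma \ref{epsrelation} hold identically for all four measures. The only subtle point, and the one I would verify carefully, is that the little-$o$ bound on $R_2(f_{n_m})$ propagates through the blocking reduction of Lemma \ref{epsrelation} into a genuine little-$o$ bound on $R_2(\textsc{WeakPar}_{N,\varepsilon})$; the linearity of the blocking reduction makes this immediate, but it is the one place where one cannot simply cite Theorem \ref{ambthm} as a black box and must instead revisit its proof.
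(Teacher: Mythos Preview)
Your proposal is correct and takes essentially the same approach as the paper, which simply declares Corollary~\ref{ambcor} to be Theorem~\ref{ambthm} ``in contrapositive form'' and gives no further argument. Your write-up faithfully unwinds that contrapositive and, in particular, correctly flags and handles the one genuine subtlety the paper glosses over: the negation of ``$\operatorname*{R}_2(f)=\Omega(\deg(f)^c)$ for all $f$'' yields only a subsequence of degrees $n_m$ with $\operatorname*{R}_2(f_{n_m})=o(n_m^c)$ rather than a full sequence with $O(n^c)$, so one must re-run the steps of Theorem~\ref{ambthm}'s proof (Proposition~\ref{degreeprop} plus Lemma~\ref{epsrelation}) along that subsequence rather than invoke the theorem as a black box; the resulting $o(N/\log^{1-c}(1/\varepsilon))$ bound at infinitely many $(N,\varepsilon)$ then contradicts the hypothesized uniform $\Omega$ lower bound.
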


Plugging our $\Omega\left(  n/\log\left(  1/\varepsilon\right)  \right)
$\ lower bound on $\operatorname*{R}\nolimits_{2}($\textsc{WeakPar}%
$_{n,\varepsilon})$\ (i.e., Theorem \ref{lower}) into Corollary \ref{ambcor},
we get only the trivial lower bound $\operatorname*{R}\nolimits_{2}\left(
f\right)  =\Omega\left(  1\right)  $ for non-constant $f$. \ On the other
hand, suppose we could prove that%
\begin{equation}
\operatorname*{R}\nolimits_{2}(\text{\textsc{WeakPar}}_{n,\varepsilon}%
)=\Omega\left(  \frac{n}{\log^{2/3}1/\varepsilon}\right)  .
\end{equation}
Then Corollary \ref{ambcor} would reproduce the result of Nisan \cite{nisan}
that $\operatorname*{R}\nolimits_{2}\left(  f\right)  =\Omega(\deg\left(
f\right)  ^{1/3})$ for all Boolean functions $f$. \ Likewise, if we could
prove that%
\begin{equation}
\operatorname*{Q}\nolimits_{2}(\text{\textsc{WeakPar}}_{n,\varepsilon}%
)=\Omega\left(  \frac{n}{\log^{5/6}1/\varepsilon}\right)  ,
\end{equation}
then Corollary \ref{ambcor}\ would reproduce the result of Beals et
al.\ \cite{bbcmw}\ that $\operatorname*{Q}\nolimits_{2}\left(  f\right)
=\Omega(\deg\left(  f\right)  ^{1/6})$\ for all $f$. \ Any \textit{better}
lower bounds than those on $\operatorname*{R}\nolimits_{2}($\textsc{WeakPar}%
$_{n,\varepsilon})$\ or $\operatorname*{Q}\nolimits_{2}($\textsc{WeakPar}%
$_{n,\varepsilon})$\ would imply better general lower bounds on
$\operatorname*{R}\nolimits_{2}\left(  f\right)  $ or $\operatorname*{Q}%
_{2}\left(  f\right)  $\ than are currently known. \ So for example, suppose
we could prove that%
\begin{equation}
\operatorname*{Q}\nolimits_{2}(\text{\textsc{WeakPar}}_{n,\varepsilon}%
)=\Omega\left(  \frac{n}{\sqrt{\log1/\varepsilon}}\right)  ;
\end{equation}
i.e., that the quantum algorithm of Theorem \ref{upper}\ was optimal. \ Then
we\ would prove the longstanding conjecture that $\operatorname*{Q}_{2}\left(
f\right)  =\Omega(\sqrt{\deg\left(  f\right)  })$\ for all Boolean functions
$f$ (the bound being saturated when $f=\operatorname*{OR}$).

One might wonder: can we also go in the other direction, and use the known
polynomial relationships between $\deg\left(  f\right)  $ and query complexity
measures\ to prove better lower bounds for \textsc{Weak Parity}? \ At present,
we cannot quite do that, but we can do something close. \ Recall from Section
\ref{INTRO} that, in defining \textsc{Weak Parity}, we did not impose any
requirement that our algorithm's acceptance probability $p\left(  X\right)
$\ approximate a total Boolean function. \ However, suppose we \textit{do}
impose that requirement. \ Then we can easily show the following:

\begin{proposition}
\label{totalprop}Fix any $\varepsilon>0$. \ Suppose an algorithm's acceptance
probability must satisfy $p\left(  X\right)  \in\left[  0,1/3\right]
\cup\left[  2/3,1\right]  $\ for all $X\in\left\{  0,1\right\}  ^{n}$. \ Then
any randomized algorithm for \textsc{WeakPar}$_{n,\varepsilon}$\ makes
$\Omega\left(  n^{1/3}\right)  $\ queries, and any quantum algorithm\ makes
$\Omega\left(  n^{1/6}\right)  $ queries.

Suppose further that the acceptance probability must satisfy $p\left(
X\right)  \in\left\{  0,1\right\}  $\ for all $X$. \ Then any randomized
algorithm for \textsc{WeakPar}$_{n,\varepsilon}$\ makes $\Omega\left(
n^{1/2}\right)  $\ queries in expectation, and any quantum algorithm\ makes
$\Omega\left(  n^{1/3}\right)  $ queries.
\end{proposition}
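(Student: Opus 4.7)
The plan is to exploit the assumption on $p(X)$ to reduce the question to a query-complexity lower bound on a \textit{total} Boolean function whose degree must be maximal, and then to invoke the known polynomial relations between degree and the various query complexity measures. Under the hypothesis $p(X)\in[0,1/3]\cup[2/3,1]$ for every $X$, I would define $f:\{0,1\}^n\rightarrow\{0,1\}$ by setting $f(X):=1$ iff $p(X)\geq 2/3$. By construction the algorithm computes $f$ with success probability at least $2/3$ on every input, so its worst-case query complexity is at least $\operatorname*{R}_2(f)$ in the randomized case and $\operatorname*{Q}_2(f)$ in the quantum case.

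Next I would use the weak parity condition $|p(X)-\text{\textsc{Par}}(X)|\leq 1/3$, which forces $f(X)=\text{\textsc{Par}}(X)$ on the $(1/2+\varepsilon)$-fraction of inputs where the condition holds. Since $(1/2+\varepsilon)2^n>2^{n-1}$, a strict majority of inputs have $f=\text{\textsc{Par}}$, so the numbers of even- and odd-Hamming-weight inputs on which $f=1$ must differ. Applying Proposition~\ref{degreeprop} to the entire cube $S=\{0,1\}^n$ then yields $\deg(f)=n$. Combining $\deg(f)=n\leq\operatorname*{D}(f)$ with Theorem~\ref{drthm} ($\operatorname*{D}(f)=O(\operatorname*{R}_2(f)^3)$) and Theorem~\ref{bbcmwthm} ($\operatorname*{D}(f)=O(\operatorname*{Q}_2(f)^6)$) gives the desired bounds $\operatorname*{R}_2(f)=\Omega(n^{1/3})$ and $\operatorname*{Q}_2(f)=\Omega(n^{1/6})$.

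For the second part, the stronger hypothesis $p(X)\in\{0,1\}$ means the algorithm produces a deterministic output on each input, so it computes $f$ with certainty. Its expected number of queries therefore upper-bounds $\operatorname*{R}_0(f)$ (and the quantum analogue upper-bounds $\operatorname*{Q}_0(f)$), using the expected-queries characterizations discussed in Sections~\ref{QUERY} and~\ref{QQUERY}. Applying $\operatorname*{D}(f)\leq\operatorname*{R}_0(f)^2$ from Theorem~\ref{drthm} and $\operatorname*{D}(f)=O(\operatorname*{Q}_0(f)^3)$ from Theorem~\ref{midrijanisthm}, together with $\operatorname*{D}(f)\geq\deg(f)=n$, yields $\operatorname*{R}_0(f)=\Omega(n^{1/2})$ and $\operatorname*{Q}_0(f)=\Omega(n^{1/3})$.

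No single step of this plan is a genuine obstacle---the whole argument is really a direct reduction to the previously cited polynomial-vs.-query-complexity relations. The one subtle point to verify carefully is that a $(1/2+\varepsilon)$-fraction of agreement with \textsc{Par}, rather than agreement on exactly $2^{n-1}+1$ inputs, still suffices to trigger Proposition~\ref{degreeprop} on the full cube; this works because any strict majority of agreement produces a nonzero parity correlation, which is exactly what the proposition demands.
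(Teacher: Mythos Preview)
Your proposal is correct and follows essentially the same route as the paper: define the total Boolean function $f$ from the rounded acceptance probability, use the majority agreement with \textsc{Par} together with Proposition~\ref{degreeprop} to force $\deg(f)=n$ (hence $\operatorname*{D}(f)=n$), and then invoke Theorems~\ref{drthm}, \ref{bbcmwthm}, and \ref{midrijanisthm}. Your added remark verifying that strict-majority agreement yields nonzero parity correlation on the full cube is exactly the point the paper handles by the displayed inequality $\left\vert\{X:f(X)=1,\ \text{\textsc{Par}}(X)=1\}\right\vert>\left\vert\{X:f(X)=1,\ \text{\textsc{Par}}(X)=0\}\right\vert$.
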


\begin{proof}
Let $f\left(  X\right)  =\left\lfloor p\left(  X\right)  \right\rceil $ be the
total Boolean function approximated by $p\left(  X\right)  $. \ Then since the
algorithm solves \textsc{Weak Parity},%
\begin{equation}
\left\vert \left\{  X:f\left(  X\right)  =1\text{ and \textsc{Par}}\left(
X\right)  =1\right\}  \right\vert >\left\vert \left\{  X:f\left(  X\right)
=1\text{ and \textsc{Par}}\left(  X\right)  =0\right\}  \right\vert .
\end{equation}
So by Proposition \ref{degreeprop},\ we must have $\deg\left(  f\right)
=\operatorname*{D}\left(  f\right)  =n$. \ By Theorems \ref{drthm},
\ref{bbcmwthm}, and \ref{midrijanisthm}, this means that%
\begin{align}
\operatorname*{R}\nolimits_{2}\left(  f\right)   &  =\Omega(\operatorname*{D}%
\left(  f\right)  ^{1/3})=\Omega(n^{1/3}),\\
\operatorname*{Q}\nolimits_{2}\left(  f\right)   &  =\Omega(\operatorname*{D}%
\left(  f\right)  ^{1/6})=\Omega(n^{1/6}),\\
\operatorname*{R}\nolimits_{0}\left(  f\right)   &  =\Omega(\operatorname*{D}%
\left(  f\right)  ^{1/2})=\Omega(n^{1/2}),\\
\operatorname*{Q}\nolimits_{0}\left(  f\right)   &  =\Omega(\operatorname*{D}%
\left(  f\right)  ^{1/3})=\Omega(n^{1/3}).
\end{align}

\end{proof}

Of course, any improvement to the known polynomial relationships between
$\operatorname*{D}\left(  f\right)  $\ and $\operatorname*{R}\nolimits_{2}%
\left(  f\right)  $, $\operatorname*{Q}\nolimits_{2}\left(  f\right)  $,
etc.\ for total Boolean $f$ would automatically yield a corresponding
improvement to Proposition \ref{totalprop}.

\section{Weak Algorithms for Other Functions\label{OTHER}}

In this section, we begin the investigation of weak algorithms for Boolean
functions other than \textsc{Parity}. \ Our main result is the following:

\begin{theorem}
\label{weakf}Let $f:\left\{  0,1\right\}  ^{n}\rightarrow\left\{  0,1\right\}
$ be any Boolean function. \ Then we can guess $f\left(  X\right)  $\ on
$2^{n-1}+1$ inputs $X$ using a bounded-error quantum algorithm that makes
$O(\sqrt{n})$ queries, a zero-error randomized algorithm that makes $O\left(
n^{0.754}\right)  $\ queries, or a zero-error quantum algorithm that makes
$O(\sqrt{n}\log^{2}n)$\ queries.
\end{theorem}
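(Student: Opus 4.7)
The plan is to reduce Theorem~\ref{weakf} to the Parity case already handled in Lemma~\ref{specialcase}, via a simple parity-counting trick. The key claim is: for every balanced Boolean $f$ and every auxiliary Boolean function $g$ with an odd number of $1$-valued inputs, either $g$ or $\neg g$ agrees with $f$ on at least $2^{n-1}+1$ inputs. Choosing $g$ to be the OR function or the complete binary AND/OR tree $T_d$ then lets us invoke Grover's algorithm, Saks--Wigderson, or Ambainis et al.\ to achieve the three query bounds claimed. The case where $f$ is unbalanced is trivial: just output the majority value of $f$ as a constant, which uses zero queries and agrees with $f$ on $\geq 2^{n-1}+1$ inputs.

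The parity-counting identity driving the main claim is $|\{X : g(X)=f(X)\}| = 2^n - |g^{-1}(1)\triangle f^{-1}(1)| \equiv |g^{-1}(1)| + |f^{-1}(1)| \pmod{2}$. For $n \geq 2$ and balanced $f$, $|f^{-1}(1)| = 2^{n-1}$ is even, so an odd $|g^{-1}(1)|$ forces the agreement count to be odd, hence $\neq 2^{n-1}$. Thus either $g$ or $\neg g$ agrees with $f$ on $\geq 2^{n-1}+1$ inputs; the sign choice is made once, non-uniformly, based on $f$. On the ``failure'' inputs, our algorithm can output anything (possibly the wrong answer), but this is explicitly permitted by the definitions of $\operatorname{Q}_2$, $\operatorname{R}_0$, and $\operatorname{Q}_0$ for weak problems. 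The edge case $n=1$ is handled by querying $x_1$.

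For the concrete surrogates: $|\text{OR}^{-1}(1)| = 2^n - 1$ is odd, and Grover computes OR in $O(\sqrt{n})$ quantum queries, giving the bounded-error quantum bound. For $T_d$, a routine induction on $d$ shows $|T_d^{-1}(1)|$ is odd (base $|T_0^{-1}(1)|=1$; the AND step multiplies two odd supports; the OR step computes $2^n$ minus a product of two odd numbers, which is odd); alternatively this is implicit in Proposition~\ref{tvspar}, since $|\{X : T_d(X) = \textsc{Par}(X)\}| = 2^{n-1} \pm 1$ is odd, forcing $|T_d^{-1}(1)|$ to have opposite parity from $|\textsc{Par}^{-1}(1)| = 2^{n-1}$. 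Saks--Wigderson (Theorem~\ref{swthm}) then computes $T_d$ in $O(n^{0.754})$ zero-error randomized queries, and Ambainis et al.\ (Theorem~\ref{acltthm}) in $O(\sqrt{n}\log^2 n)$ zero-error quantum queries. The only delicate point of the whole argument is the odd-support verification for $T_d$, but that is elementary; no new algorithmic machinery is needed beyond the three cited algorithms.
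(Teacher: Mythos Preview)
Your proof is correct and follows essentially the same approach as the paper: split into the unbalanced case (output a constant) and the balanced case, then use the fact that $\operatorname{OR}$ and $T_d$ each have odd-sized support to force the agreement count with $f$ to be odd and hence $\neq 2^{n-1}$, invoking Grover, Saks--Wigderson, and Ambainis et al.\ respectively. Your write-up spells out the parity-counting identity and the odd-support induction for $T_d$ more explicitly than the paper does, but the underlying argument is identical.
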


\begin{proof}
Assume without loss of generality that $n$\ has the form $2^{d}$\ for $d\geq
2$\ (this will not affect the asymptotics). \ There are two cases. \ First,
suppose $f$ is \textit{unbalanced}: that is,%
\begin{equation}
\left\vert \left\{  X:f\left(  X\right)  =1\right\}  \right\vert \neq2^{n-1}.
\end{equation}
Then we can trivially agree with $f$ on at least $2^{n-1}+1$\ inputs $X$, by
either always outputting $0$ or always outputting $1$.

Second, suppose $f$\ is balanced. \ Note that the $\operatorname*{OR}%
$\ function outputs $1$ on an odd number of inputs $X$. \ It follows that
$\left\vert \left\{  X:f\left(  X\right)  =\operatorname*{OR}\left(  X\right)
\right\}  \right\vert $\ must be odd as well, and cannot equal $2^{n-1}$. \ So
either $\operatorname*{OR}\left(  X\right)  $\ or $1-\operatorname*{OR}\left(
X\right)  $\ must agree with $f\left(  X\right)  $\ on at least $2^{n-1}%
+1$\ inputs $X$. \ Thus, Grover's algorithm gives us the desired bounded-error
quantum algorithm making $O(\sqrt{n})$ queries.

For the other algorithms, recall Proposition \ref{tvspar}, which tells us that
the AND/OR tree $\operatorname*{T}_{d}$\ also outputs $1$ on an odd number of
inputs $X$. \ So by the same reasoning as above, either $\operatorname*{T}%
_{d}\left(  X\right)  $\ or $1-\operatorname*{T}_{d}\left(  X\right)  $\ must
agree with $f\left(  X\right)  $\ on at least $2^{n-1}+1$\ inputs $X$.
\ Hence, we can use Theorem \ref{swthm}\ to get the desired zero-error
randomized algorithm making $O\left(  n^{0.754}\right)  $\ queries, and use
Theorem \ref{acltthm}\ to get a zero-error quantum algorithm making
$O(\sqrt{n}\log^{2}n)$\ queries.
\end{proof}

Interestingly, unlike for \textsc{Parity}, for arbitrary $f$ it is unclear
whether we can get any nontrivial algorithms when $\varepsilon$ is larger than
$2^{-n}$. \ Our proof of Lemma \ref{epsrelation} relied essentially on
\textsc{Parity}'s downward self-reducibility, so it does not generalize to
other functions.

Note also that we cannot hope to prove any general \textit{lower} bound on the
weak query complexity of $f$, even assuming that $f$ is balanced and that its
quantum query complexity is $\Omega\left(  n\right)  $. \ As a counterexample,
let $\operatorname*{H}\left(  X\right)  =1$\ if $X$ has Hamming weight at
least $2n/3$\ and $\operatorname*{H}\left(  X\right)  =0$\ otherwise; then
consider%
\begin{equation}
f\left(  x_{1},\ldots,x_{n}\right)  :=x_{1}\oplus\operatorname*{H}\left(
x_{2},\ldots,x_{n}\right)  .
\end{equation}

\section{Open Problems\label{CONC}}

The obvious problem is to close the gaps between our upper and lower bounds on
the query complexity of \textsc{Weak Parity}. \ We have seen that this problem
is intimately related to longstanding open problems in the study of Boolean
functions, including polynomial degree versus query complexity, the
Sensitivity Conjecture, and lower-bounding Chung et al.'s \cite{chung}%
\ combinatorial quantity $\Lambda\left(  n\right)  $.\ \ Perhaps the
surprising relationships among these problems could motivate renewed attacks.

In the meantime, can we reprove our $\Omega\left(  n/\log\left(
1/\varepsilon\right)  \right)  $\ lower bound for \textsc{Weak Parity}\ (or
better yet, improve it)\ \textit{without} exploiting \textsc{Parity}'s random
self-reducibility? \ How far can we get by using (say) the polynomial or
adversary methods directly? \ It would also be great if we could say something
about weak algorithms for functions other than \textsc{Parity}, beyond what we
said in Section \ref{OTHER}: for example, what happens if $\varepsilon>2^{-n}$?

Let us end with three more specific questions:

\begin{enumerate}
\item[(1)] Do we ever get faster algorithms for \textsc{Weak Parity}, if we
drop the constraint that the algorithm's acceptance probability approximates a
total Boolean function $f$?\footnote{If the answer is no, then of course
Proposition \ref{totalprop}\ already gives us good lower bounds for
\textsc{Weak Parity}\ when $\varepsilon=2^{-n}$.}

\item[(2)] Can we \textquotedblleft interpolate\textquotedblright\ between our
two different ways of proving lower bounds for \textsc{Weak Parity}, to get
better lower bounds than $\Omega(\log n)$\ or $\Omega(\sqrt{\log n})$\ when
$\varepsilon$\ is small but still larger than $2^{-n}$?

\item[(3)] Can we show that an $n^{\Omega\left(  1\right)  }$\ lower bound for
\textsc{Weak Parity}\ is directly implied by the Sensitivity Conjecture,
rather than the related conjecture that $\Lambda\left(  n\right)  \geq
n^{\Omega\left(  1\right)  }$?
\end{enumerate}

\section{Acknowledgments}

We thank Ronald de Wolf, both for helpful discussions and for his comments on
a draft, and Andy Drucker for allowing us to include Proposition
\ref{druckerprop}.

\bibliographystyle{plain}
\bibliography{thesis}

\begin{thebibliography}{10}

\bibitem{aar:bf}
S.~Aaronson.
\newblock Algorithms for {B}oolean function query properties.
\newblock {\em SIAM J. Comput.}, 32(5):1140--1157, 2003.

\bibitem{aar:cer}
S.~Aaronson.
\newblock Quantum certificate complexity.
\newblock In {\em Proc. IEEE Conference on Computational Complexity}, pages
  171--178, 2003.
\newblock ECCC TR03-005, quant-ph/0210020.

\bibitem{aclt}
A.~Ambainis, A.~Childs, F.~Le Gall, and S.~Tani.
\newblock The quantum query complexity of certification.
\newblock {\em Quantum Information and Computation}, 10(3-4), 2010.
\newblock arXiv:0903.1291.

\bibitem{acrsz}
A.~Ambainis, A.~M. Childs, B.~W. Reichardt, R.~\v{S}palek, and S.~Zhang.
\newblock Any {AND-OR} formula of size {$N$} can be evaluated in time
  {$N^{1/2+o(1)}$} on a quantum computer.
\newblock In {\em Proc. IEEE FOCS}, 2007.
\newblock quant-ph/0703015 and arXiv:0704.3628.

\bibitem{bbcmw}
R.~Beals, H.~Buhrman, R.~Cleve, M.~Mosca, and R.~de Wolf.
\newblock Quantum lower bounds by polynomials.
\newblock {\em J. ACM}, 48(4):778--797, 2001.
\newblock Earlier version in IEEE FOCS 1998, pp. 352-361. quant-ph/9802049.

\bibitem{bbbv}
C.~Bennett, E.~Bernstein, G.~Brassard, and U.~Vazirani.
\newblock Strengths and weaknesses of quantum computing.
\newblock {\em SIAM J. Comput.}, 26(5):1510--1523, 1997.
\newblock quant-ph/9701001.

\bibitem{bcwz}
H.~Buhrman, R.~Cleve, R.~de Wolf, and Ch. Zalka.
\newblock Bounds for small-error and zero-error quantum algorithms.
\newblock In {\em Proc. IEEE FOCS}, pages 358--368, 1999.
\newblock cs.CC/9904019.

\bibitem{bw}
H.~Buhrman and R.~de Wolf.
\newblock Complexity measures and decision tree complexity: a survey.
\newblock {\em Theoretical Comput. Sci.}, 288:21--43, 2002.

\bibitem{chung}
F.~R.~K. Chung, Z.~F\"{u}redi, R.~L. Graham, and P.~Seymour.
\newblock On induced subgraphs of the cube.
\newblock {\em J. Comb. Theory Ser. A}, 49(1):180--187, 1988.

\bibitem{dj}
D.~Deutsch and R.~Jozsa.
\newblock Rapid solution of problems by quantum computation.
\newblock {\em Proc. Roy. Soc. London}, A439:553--558, 1992.

\bibitem{fgg:nand}
E.~Farhi, J.~Goldstone, and S.~Gutmann.
\newblock A quantum algorithm for the {H}amiltonian {NAND} tree.
\newblock {\em Theory of Computing}, 4(1):169--190, 2008.
\newblock quant-ph/0702144.

\bibitem{farhi:parity}
E.~Farhi, J.~Goldstone, S.~Gutmann, and M.~Sipser.
\newblock A limit on the speed of quantum computation in determining parity.
\newblock {\em Phys. Rev. Lett.}, 81:5442--5444, 1998.
\newblock quant-ph/9802045.

\bibitem{gotsmanlinial}
C.~Gotsman and N.~Linial.
\newblock The equivalence of two problems on the cube.
\newblock {\em J. Comb. Theory Ser. A}, 61(1):142--146, 1992.

\bibitem{hatami}
P.~Hatami, R.~Kulkarni, and D.~Pankratov.
\newblock Variations on the sensitivity conjecture.
\newblock {\em Theory of Computing Library Graduate Surveys}, 4, 2011.

\bibitem{midrijanis:drq}
G.~Midrijanis.
\newblock On randomized and quantum query complexities.
\newblock quant-ph/0501142, 2005.

\bibitem{nisan}
N.~Nisan.
\newblock {CREW PRAM}s and decision trees.
\newblock {\em SIAM J. Comput.}, 20(6):999--1007, 1991.

\bibitem{ns}
N.~Nisan and M.~Szegedy.
\newblock On the degree of {B}oolean functions as real polynomials.
\newblock {\em Computational Complexity}, 4(4):301--313, 1994.

\bibitem{rubinstein}
D.~Rubinstein.
\newblock Sensitivity vs.\ block sensitivity of {B}oolean functions.
\newblock {\em Combinatorica}, 15(2):297--299, 1995.

\bibitem{sakswigderson}
M.~Saks and A.~Wigderson.
\newblock Probabilistic {B}oolean decision trees and the complexity of
  evaluating game trees.
\newblock In {\em Proc. IEEE FOCS}, pages 29--38, 1986.

\bibitem{santha}
M.~Santha.
\newblock On the {M}onte-{C}arlo decision tree complexity of read-once
  formulae.
\newblock {\em Random Structures and Algorithms}, 6(1):75--87, 1995.

\bibitem{simonsf}
H.-U. Simon.
\newblock A tight {$\Omega(\log \log n)$} bound on the time for parallel
  {RAM}'s to compute non-degenerate {B}oolean functions.
\newblock In {\em Foundations of Computing Theory}, volume 158, pages 439--444.
  Springer-Verlag, 1983.

\bibitem{tal}
A.~Tal.
\newblock Properties and applications of {B}oolean function composition.
\newblock In {\em Proc. Innovations in Theoretical Computer Science (ITCS)},
  pages 441--454, 2013.
\newblock ECCC TR12-163.

\end{thebibliography}

\end{document}